%% file: main.tex
\documentclass[11pt]{article}

\usepackage{amsmath,amsthm,amssymb,amsfonts}
\usepackage{enumitem}
\usepackage{aliascnt}
\usepackage[colorlinks]{hyperref}
\hypersetup{linkcolor=blue,citecolor=blue}
\usepackage{cleveref}
\usepackage{fullpage}
\usepackage{tcolorbox}
\usepackage{complexity}

\usepackage{pgfplots}
\pgfplotsset{compat=1.18}
\usepackage{tikz}
\usetikzlibrary{quantikz2}

\allowdisplaybreaks
\hypersetup{
  pdftitle={Exactification and Discrete-Basis Robustness for QAC0},
  pdfauthor={Daniel Grier, Jackson Morris, and Kewen Wu}
}

\Crefformat{equation}{(#2#1#3)}

\newtheorem{theorem}{Theorem}
\newaliascnt{lemma}{theorem}
\newtheorem{lemma}[lemma]{Lemma}
\aliascntresetthe{lemma}
\newaliascnt{proposition}{theorem}

\aliascntresetthe{proposition}
\newaliascnt{corollary}{theorem}
\newtheorem{corollary}[corollary]{Corollary}
\aliascntresetthe{corollary}
\newaliascnt{claim}{theorem}

\aliascntresetthe{claim}
\newaliascnt{fact}{theorem}

\aliascntresetthe{fact}
\newaliascnt{definition}{theorem}
\newtheorem{definition}[definition]{Definition}
\aliascntresetthe{definition}
\newaliascnt{question}{theorem}

\aliascntresetthe{question}
\newaliascnt{remark}{theorem}

\aliascntresetthe{remark}

\crefname{theorem}{theorem}{theorems}
\crefname{lemma}{lemma}{lemmas}
\crefname{proposition}{proposition}{propositions}
\crefname{corollary}{corollary}{corollaries}
\crefname{claim}{claim}{claims}
\crefname{fact}{fact}{facts}
\crefname{definition}{definition}{definitions}
\crefname{question}{question}{questions}
\crefname{remark}{remark}{remarks}

\providecommand{\QAC}{\ComplexityFont{QAC}}
\providecommand{\BQAC}{\ComplexityFont{BQAC}}
\providecommand{\EQAC}{\ComplexityFont{EQAC}}

\providecommand{\QNC}{\ComplexityFont{QNC}}

\providecommand{\CZ}{\mathsf{CZ}}
\providecommand{\AND}{\mathsf{AND}}
\providecommand{\OR}{\mathsf{OR}}

\providecommand{\EX}{\mathsf{EX}}

\providecommand{\THR}{\mathsf{THR}}
\providecommand{\Fanout}{\mathsf F}
\providecommand{\Fanout}{\mathsf{CZ}}

\providecommand{\poly}{\operatorname{poly}}
\providecommand{\polylog}{\operatorname{polylog}}

\providecommand{\eps}{\varepsilon}
\providecommand{\ket}[1]{\lvert #1\rangle}
\providecommand{\bra}[1]{\langle #1\rvert}
\providecommand{\braket}[2]{\langle #1\mid #2\rangle}

\providecommand{\proj}[1]{\lvert #1\rangle\!\langle #1\rvert}
\providecommand{\sket}[1]{\,|#1\rangle}
\providecommand{\norm}[1]{\left\lVert #1\right\rVert}
\providecommand{\abs}[1]{\left\lvert #1\right\rvert}
\providecommand{\set}[1]{\left\{#1\right\}}

\newcommand{\PREP}{\mathrm{PREP}}
\newcommand{\SEL}{\mathrm{SEL}}
\newcommand{\err}{\mathsf{err}}
\newcommand{\init}{\mathsf{init}}
\newcommand{\target}{\mathsf{targ}}

\title{The Robustness of $\QAC^0$}
\author{
Daniel Grier\thanks{UCSD. Email: \texttt{dgrier@ucsd.edu}}
\and Jackson Morris\thanks{UCSD. Email: \texttt{jrm035@ucsd.edu}}
\and Kewen Wu\thanks{Caltech. Email: \texttt{shlw\_kevin@hotmail.com}.}
}
\date{}

\begin{document}
\maketitle
\begin{abstract}
In this work we study the robustness of $\QAC^0$ with respect to error tolerance and modifications to its gate-set. First, we investigate whether the non-zero error typically allowed for $\QAC^0$ circuits computing Boolean functions is truly necessary. We show that the error inherent in the parallel $W$-test of \cite{grier_morris_wu} can be eliminated entirely via a novel application of exact amplitude amplification in the many-copies context. Consequently, we find that $\QAC^0$ can \emph{exactly} simulate $\TC^0$ with polynomially many copies of the classical input and that for every fixed prime $p$ exact $\QAC^0$, $\EQAC^0$, can compute total Boolean functions outside of $\AC^0[p]$.

Second, we ask to what extent the computational power of $\QAC^0$ follows from the fact that arbitrary single-qubit gates may be used at any point in the circuit. We find that $\QAC^0$ is in fact robust to restrictions on which single-qubit gates are permitted: every $\QAC^0$ circuit can be approximately implemented by a $\QAC^0$ circuit consisting of just generalized Toffoli, $S$, and Hadamard gates. Moreover, this approximating circuit can be constructed efficiently from a classical description of the original circuit.
\end{abstract}

\tableofcontents

\clearpage
\input{sections/introduction}
\input{sections/preliminaries}
\input{sections/discrete_basis}
\input{sections/exactification_1}

\section{AI disclosure}

Results in this paper were obtained through interactions with GPT 5.6 Sol: the authors provided intuitive justification and intended proof approach, then GPT confirmed the heuristics with rigorous proofs, which were further simplified and generalized in the same human-LLM interaction cycle.
GPT 5.6 Sol and GPT 6 were used to draft the paper, which was fully rewritten by the authors.
All mathematical claims and exposition remain the responsibility of the authors.

\bibliographystyle{alphaurl}
\bibliography{bibli}

\appendix
\input{sections/appendices}

\end{document}

%% file: sections/introduction.tex
\section{Introduction}\label{sec:introduction}
Understanding which features of a computational model are essential to its power is a central goal of quantum complexity theory. Two very natural aspects of a computational model which may be essential for its power are 1) the set of allowed operations (or more specifically gates in circuit models) and 2) whether computations in the model can be made errorless. Further, these questions are particularly delicate for constant-depth quantum circuit models as standard methods for error reduction and and gate synthesis may require a super-constant factor blow-up in circuit depth. In this work, we explore these questions in the context of $\QAC^0$ - the class of problems decidable by constant-depth, polynomial-size quantum circuits which use polynomially many ancilla qubits and consist of generalized Toffoli and arbitrary single-qubit gates.

This class was originally introduce by Moore as a quantum analogue of $\AC^0$ in \cite{moore:1999}, though it has proven to be substantially more difficult to characterize than its classical counterpart. While a plethora of combinatorial and analytic techniques have given us a very precise understanding of the limitations of $\AC^0$ \cite{ajtai_parity, fss_ac0, hastad_thesis, lmn}, the same cannot be said for $\QAC^0$ despite numerous lower bounds in restricted settings \cite{depth2, depth3, anshu_dong_ou_yao:2024_QAC0, pauli_spec}. As such, $\QAC^0$ sits at the frontier of our understanding of shallow quantum computation. 

Recent work has shown that $\QAC^0$ can solve tasks beyond the reach of $\AC^0$ - a phenomena that is provably false for the weaker quantum circuit model $\QNC^0$. Explicitly, \cite{grier_morris_wu} showed that bounded-error $\QAC^0$ circuits can compute total Boolean functions outside $\AC^0[p]$ for every fixed prime (p). They also showed that arbitrary $\mathrm{TC}^0$ circuits can be simulated by to inverse exponential error by a $\QAC^0$ circuit supplied with polynomially many copies of the classical input.

One limitation of this result is that inverse-exponential error in either result cannot be eliminated in an obvious way. Another is that the gates used in these constructions are somewhat non-standard; for different input sizes $n$, the quantum circuit witnessing the simulation requires different single-qubit gates. Moreover, these gates do not lie in the Clifford hierarchy in general.

These limitations of the construction raise the following broader questions about $\QAC^0$:
\begin{enumerate}
    \item Must $\QAC^0$ circuits have some non-zero when computing Boolean functions or can any such computation be made exact? Does $\QAC^0$'s advantage against $\AC^0[p]$ vanish if zero-error is required?
    \item Are all single-qubit gates necessary for $\QAC^0$ to retain its power? Does $\QAC^0$ admit a discrete gate-set, i.e., can $\QAC^0$ circuits be implemented with just generalized Toffoli and finitely many single qubit gates?
\end{enumerate}
A number of methods for gate synthesis and error reduction have been developed for generic quantum circuits - however, they cannot be immediately applied to the constant-depth regime. In particular, the Solovay-Kitaev theorem shows that any fixed, inverse-closed set of single-qubit gates generating a dense subgroup of $\mathsf{SU}(2)$ can approximate an arbitrary single-qubit unitary to accuracy $\epsilon$ using only $\polylog(1/\epsilon)$ gates \cite{Solovay1995, Kitaev1997}. However, this method in general result in a compiled circuit whose depth is a super-constant factor larger than the original.

Shi further showed that the Hadamard and Toffoli gate can be leveraged to simulate arbitrary quantum computations, so a gate-set need not generate a dense subset of $\mathsf{SU}(2)$ in order to be universal \cite{shi_universal}. Again though, this construction can in general result in an increase in circuit depth by a factor of $n$, the number of qubits the circuit acts on.

Our first main result is that $\QAC^0$ does indeed admit a simple, discrete gate-set:
\begin{theorem}[See also \Cref{thm:HST_universal}]
\label{thm:intro-discrete-basis}
For every $\epsilon > 0$ and $\QAC$ circuit, $Q$, of depth-$d$ and size-$s$ acting on $n$ qubits there exists a depth-$O(d)$, size $O(s\log^2{s/\epsilon}$) circuit, $\tilde{Q}$ acting on $O( n + s\log^2{s/\epsilon})$ qubits such that $\Tilde{Q}$ only consists of generalized Toffoli, Hadamard, and $S$ gates and has operator-norm distance at most $\epsilon$ from $\mathbb{I}_m \otimes Q$ (where $\mathbb{I}_m$ accounts for ancilla qubits used by $\Tilde{Q}$). Moreover, $\Tilde{Q}$ can be found efficiently given the description of $Q$.
\end{theorem}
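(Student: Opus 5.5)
The approach is to replace each arbitrary single-qubit gate $U$ of $Q$, one at a time and in parallel across each layer, by a constant-depth gadget built from generalized Toffoli, $H$ and $S$ gates. Each gadget approximates $U$ to accuracy $\epsilon/s$ in operator norm and returns its ancillas to $\ket{0}$. By the triangle inequality (the hybrid argument for operator norm), the total error is at most $s\cdot \epsilon/s=\epsilon$. Each layer of $Q$ becomes a constant number of layers, so the depth is $O(d)$. The claimed size then requires each gadget to have size $O(\log^2(s/\epsilon))$ and to use that many ancillas.

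First I would decompose $U$, up to a global phase, by Euler angles as $U = e^{i\alpha} R_z(\beta) H R_z(\gamma) H R_z(\delta)$. Since $H$ is available, it suffices to implement the diagonal rotations $R_z(\theta)=\mathrm{diag}(1,e^{i\theta})$ approximately in constant depth. The global phase is harmless for the operator-norm claim only if it cancels; so I would implement the controlled versions diagonal in the computational basis, which fixes the phase exactly, and compare against $\mathbb{I}_m\otimes Q$ up to a single overall phase, or absorb it.

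Second, I would implement $R_z(\theta)$ with $\theta\approx 2\pi\sum_{j=1}^{k} b_j 2^{-j}$, where $k=O(\log(s/\epsilon))$. The direct method is to apply the phases $\mathrm{diag}(1,e^{2\pi i 2^{-j}})$ for the bits $b_j=1$, all conditioned on the same target qubit. Diagonal gates commute, so this is in principle depth one. The difficulty is that these phase gates lie outside $\{H,S,\text{Toffoli}\}$ for $j\ge 3$.

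I expect this to be the main obstacle: producing the fine phases $e^{2\pi i/2^j}$ in constant depth without Solovay--Kitaev. My first attempt would be phase kickback from a Fourier state. Prepare $\ket{\psi_k}=\frac{1}{\sqrt{2^k}}\sum_x e^{2\pi i x/2^k}\ket{x}$, an eigenstate of the increment-by-$c$ operator with eigenvalue $e^{-2\pi i c/2^k}$. Then the target-controlled adder of the constant $c=\lfloor\theta 2^k/2\pi\rceil$ imprints the desired phase. Addition of a constant controlled by one qubit is computable in $\AC^0$ using carry-lookahead with generalized Toffolis, giving width $O(k^2)$, consistent with the $\log^2$ factor.

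The resource state itself is not constant-depth preparable exactly from $H,S$. So I would instead use approximate preparation by amplitude-free constructions: the state $\ket{\psi_k}$ is a product of single-qubit states $(\ket0+e^{2\pi i 2^{j}/2^k}\ket1)/\sqrt2$, which reduces back to needing fine phases. I would therefore aim to produce it via a different route. One option is a cyclic subgroup state prepared using a uniform superposition plus a modular-addition test, postselected with exact amplitude amplification as in the exactification section. Another is to approximate the phase with the $T$-free structure $e^{i\theta}$ arising from Gaussian-integer ratios $(1+2i)/\sqrt5$, whose powers $e^{i m\arctan(4/3)}$ are dense. These powers can be synthesized by counting, i.e.\ $\AC^0$ arithmetic on ancillas, combined with a Hadamard-based projection implemented by a constant number of amplitude-amplification rounds. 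I would commit to the Fourier-state kickback and spend the effort on showing that the Fourier state (or an $\epsilon/s$-approximation) can be prepared and reused in $O(1)$ depth from $H,S,$ Toffoli.

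Efficiency of finding $\tilde Q$ follows because the angle bits $b_j$ and all gadgets are computed classically in $\poly(s,\log(1/\epsilon))$ time.
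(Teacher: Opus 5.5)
Your outer reduction is fine, and it is also what the paper does: replace every single-qubit gate in parallel by a constant-depth gadget of accuracy $\epsilon/s$, sum the errors by the triangle inequality, and compute the angles classically. But the gadget itself, which is the whole content of the theorem, is left open, so there is a genuine gap.

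Phase kickback from $\ket{\psi_k}$ only relocates the difficulty. As you note, $\ket{\psi_k}$ is a product of the states $(\ket0+e^{2\pi i 2^j/2^k}\ket1)/\sqrt2$. Preparing (and later un-preparing) it to accuracy $\epsilon/s$ in constant depth from $H$, $S$ and Toffoli is therefore exactly the problem you started with. Your fallbacks do not break this circularity:
\begin{itemize}
\item The exact amplitude amplification of \Cref{lem:single-slice} tunes the overlap to exactly $1/2$ using the arbitrary-angle gate $\mathsf{C}U(\theta_k)$, which is precisely what is unavailable here.
\item A phase such as $(1+2i)/\sqrt5$ cannot be produced exactly either. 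It lies outside $\mathbb{Z}[1/\sqrt2,i]$, where all matrix entries of $\{H,S,\text{Toffoli}\}$ circuits live. Approximating it, or taking its powers, needs the same kind of gadget you are missing.
\end{itemize}

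There is also a second, independent problem: even granting the resource state, a controlled constant-adder is not free in constant depth. Classical $\AC^0$ carry-lookahead relies on unbounded fanout. Here the control qubit and each register bit must participate in $\Omega(k)$ carry terms. So you would need a coherent fanout (equivalently, parity) gadget on $\Theta(\log(s/\epsilon))$ qubits built from $\{H,S,\text{Toffoli}\}$ alone, which you do not supply. Finally, the global phase cannot be discarded, since the statement is an operator-norm bound against $\mathbb{I}_m\otimes Q$.

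The idea you are missing is to approximate \emph{probabilities} rather than phases, and then convert them into a rotation.
\begin{itemize}
\item \textbf{Coherent sampler.} In \Cref{lem:coherent-bernoulli}, Hadamards on blocks of $r_j$ fresh qubits, one generalized Toffoli per block, and a final negated $\OR$ produce a flag equal to $1$ with probability exactly $\prod_j(1-2^{-r_j})$. Choosing each $r_j$ as the smallest value keeping the product at least $p$ halves the gap every round. So $O(\log(1/\delta))$ blocks of size $O(\log(1/\delta))$ give a $\delta$-accurate sampler in constant depth, with no fanout.
\item \textbf{LCU.} Two such samplers and one $H$ form $\PREP$ for the non-negative decomposition
\[
U_\theta=\cos\theta\,Z+\sin\theta\,X=\tfrac{1+\cos\theta}{2}(+Z)+\tfrac{1-\cos\theta}{2}(-Z)+\tfrac{1+\sin\theta}{2}(+X)+\tfrac{1-\sin\theta}{2}(-X).
\]
$\SEL$ is four doubly-controlled Paulis, so $L=\PREP^\dagger\,\SEL\,\PREP$ block-encodes $\widetilde U_\theta/2$ with $\norm{\widetilde U_\theta-U_\theta}=O(\delta)$.
\item \textbf{One round of amplification.} The good amplitude is $1/2=\sin(\pi/6)$. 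Hence a single round of oblivious amplitude amplification, $C=-LRLRL$ with $R$ a single generalized-Toffoli reflection, maps $\ket{0\cdots0}\ket\psi$ to within $O(\delta)$ of $\ket{0\cdots0}U_\theta\ket\psi$. The analysis uses $U_\theta^2=I$ and that $L$ is Hermitian.
\item \textbf{General gates and phase.} Arbitrary gates follow from $V=U_{\theta_1/2}\,S\,U_{\theta_2/2}\,S^3\,U_{\theta_3/2}\,S^2$. The global phase is supplied on one extra ancilla by $D_\alpha=HS^3U_{-\alpha}S^3H$.
\end{itemize}
The target qubit only ever sees a constant number of controlled Paulis, which is why no fanout or arithmetic is needed.
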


Regarding error elimination, exact amplitude amplification provides a method for eliminating error when the initial success probability is \emph{known} \cite{Hoy00, BHMT02}. However, a direct application need not preserve circuit depth: the standard construction uses a sequence of applications of the original circuit and its inverse and the precise number of iterations depends on the original success probability. For the $W$-test, the rejection probability on an incorrect input can be as small as $\Theta(1/n^2)$, leading to a non-constant number of sequential amplification steps. Parallel repetition improves this probability to a constant, but its precise value still depends on the Hamming weight of the input, which is precisely the quantity we are interested in computing. Obtaining an exact constant-depth implementation therefore requires overcoming both the sequential nature of the amplification and its dependence on the initial success probability. 

Our second main result does precisely this:
\begin{theorem}[See also \Cref{thm:zero_error_ex}]
\label{thm:intro-exact-tc0}
Every language in $\TC^0$ can be computed exactly by a family of constant-depth, polynomial-size $\QAC$ circuits using polynomially many copies of its input. Consequently,
\[
                 \TC^0 \subseteq \EQAC^0\circ\NC^0.
\]
\end{theorem}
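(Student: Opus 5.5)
We will reduce the theorem to exact computation of the \emph{exactly-$k$} functions $\EX_k(x)=1$ iff $|x|=k$ on $\poly(n)$ copies of $x$. Once these are exact, the rest follows the argument of \cite{grier_morris_wu}: any threshold gate is $\THR_t(x)=\bigvee_{k\ge t}\EX_k(x)$, and the $\OR$ of polynomially many exact bits is a single generalized Toffoli. A $\TC^0$ circuit of depth $d$ is then simulated layer by layer. Each threshold gate is replaced by an exact $\QAC^0$ gadget, and the copies required by the next layer are produced in advance: the $\NC^0$ wiring supplies enough copies of the input, and we fan out each computed bit onto fresh ancillas by CNOT. All gadgets are exact and act on computational-basis inputs, so the composition is exact. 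We uncompute garbage with the inverse circuit, still in constant depth, so the final output is a clean basis state. This gives $\TC^0\subseteq\EQAC^0\circ\NC^0$.

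The core step is an exact $\EX_k$ gadget. The $W$-test of \cite{grier_morris_wu}, applied to one copy of $x$ with the single-qubit gates chosen for target weight $k$, prepares a state whose amplitude $a_w$ on the ``reject'' subspace is a known function of $w=|x|$, with $a_k=0$ and $|a_w|^2\ge\Omega(1/n^2)$ for $w\ne k$. We do not amplify this test directly. Instead we run polynomially many parallel copies, one per copy of $x$. For each fixed $w$ the joint reject probability is then some $p_w$, and our aim is to turn it into an exact $1$. Exact amplitude amplification \cite{Hoy00,BHMT02} needs a known success probability, and its standard form is sequential, which rules it out here. Two tools address this. First, we can tune the success probability: using the free choice of single-qubit rotations, or an extra ancilla rotated by a chosen angle, we lower any success amplitude to a prescribed value such as $1/2$, at which a single Grover iteration succeeds with certainty. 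Second, the copies resolve the dependence on $w$. For every candidate weight $j\in\{0,\dots,n\}$ we run in parallel a branch whose gates are calibrated under the hypothesis $|x|=j$. This branch is exact whenever $|x|=j$. Whenever $|x|\ne j$, it can still be arranged to certify that fact with certainty, because every amplitude involved is an explicit trigonometric function of $w$ and we control the angles.

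In more detail, inside branch $j$ the circuit first computes, in superposition over the copies, a flag that is $1$ with amplitude exactly $\sqrt{q_j(w)}$ for a known polynomial $q_j$. It then applies the tuned rotation so that $w=j$ gives amplitude exactly $0$. For every $w\ne j$ we want one reflection-based amplification round to produce amplitude exactly $1$. Each round has constant depth, since the reflections are generalized Toffolis conjugated by the state-preparation circuit. The remaining task is to combine the $n+1$ branch outcomes. For a basis input exactly one hypothesis is true, and we take $\EX_k=\neg(\text{branch }k\text{ rejects})$, reading off the deterministic flag.

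The main obstacle is the step I glossed over: making one constant-depth amplification round exact simultaneously for every $w\ne j$, since the reject probability varies with $w$. My first attempt would be to use the many copies to make the per-copy probability $w$-independent. Each copy would carry an extra ancilla, and a controlled rotation conditioned on the other copies would renormalise the acceptance amplitude to exactly $\sin(\pi/6)$ whenever it is nonzero. If that fails, I would fall back on a parallel ``one of many'' construction. Here we use $m$ copies with amplitudes $\theta_w$, and choose per-copy phases so that a generalized-Toffoli reflection on the OR of all copies acts as an exact Grover step on the collective state. The phases would be chosen so that the $m$ angles sum exactly to $\pi/2$. Because $\theta_w$ is known in closed form for each hypothesis, this reduces to a solvable trigonometric design problem for each $(j,w)$, which is where I expect most of the technical work to lie.
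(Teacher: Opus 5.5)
There is a genuine gap, and it is the step you flag yourself as the main obstacle. Your outer reduction matches the paper: threshold gates as an $\OR$ of exact $\EX$ tests, then a constant-depth simulation of the $\TC^0$ circuit. Your two tools are also the right ones: tune an amplitude to exactly $1/2$, then do one reflection round.

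The problem is how you deploy them. In your setup, branch $j$ is an amplified $W$-test whose reject amplitude vanishes at $w=j$. You then need one amplification round to drive the reject amplitude to exactly $1$ for \emph{every} $w\neq j$ at once. One round reaches certainty only if the tuned amplitude equals $\sin(\pi/6)$ (more generally, only if the round's parameters are matched to the initial amplitude). But the reject probability of the parallel $W$-tests depends on $|w-j|$. So any $w$-independent tuning is exact only for the weights at a single distance from $j$.

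Both of your remedies need the unknown $w$:
\begin{itemize}
\item A rotation that ``renormalises'' each copy's amplitude conditioned on the other copies would require those copies to compute $|x|$, or an exact function of it, which is the problem itself.
\item Choosing phases per $(j,w)$ makes the circuit depend on $w$.
\end{itemize}
Note also that your final rule reads only branch $k$, so the other $n$ branches contribute nothing. As set up, each branch would have to be a complete exact $\EX_j$ tester on its own.

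The missing idea is to parallelise over the \emph{alternative} weights and exploit one-sidedness. Fix the target weight. The paper takes $n/2$ and handles a general target $k$ by padding each copy with $1^{n-k}0^k$. For each alternative $j\neq n/2$, build a sub-circuit $Q_j$ as follows.
\begin{itemize}
\item All of its $W$-tests vanish at weight $n/2$.
\item Its copy count $m_j=\lceil n^2/(4\Delta^2)\rceil$, with $\Delta=|n/2-j|$, is calibrated for weight $j$. On weight $j$ the ``some test failed'' probability is then at least $1-1/e>1/4$.
\item A controlled rotation into a fresh ancilla, also calibrated for weight $j$, sets that amplitude to exactly $1/2$.
\item One round makes $Q_j$ reject with certainty on weight $j$. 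The round is a $\CZ$ followed by reflection about the prepared state, implemented by conjugating a generalized Toffoli with the preparation circuit.
\end{itemize}
On weight $n/2$ the reject amplitude is identically zero, so $Q_j$ accepts with certainty \emph{regardless of its calibration}. On any third weight, $Q_j$ may output anything.

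The $\AND$ of all the $Q_j$ is then exact. On weight $n/2$ every sub-test deterministically accepts. On weight $j\neq n/2$ the sub-test $Q_j$ deterministically rejects, which forces the $\AND$ to $0$ whatever the others do. The total copy count is $\sum_j m_j=O(n^2)$.

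A smaller point: do not fan a computed bit out to polynomially many ancillas with $\mathrm{CNOT}$s. That costs logarithmic depth, and only polylogarithmic fanout is known in $\QAC^0$. Instead, recompute each gate of the $\TC^0$ circuit independently on fresh input copies. The number of copies then grows by a polynomial factor per layer and stays polynomial for constant depth.
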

Further, this exactification carries over to the decision problem separation of \cite{grier_morris_wu}, i.e., $\EQAC^0 \not \subset \AC^0[p]$ for any fixed prime $p$ (see \Cref{cor:exact_separation}), meaning that $\QAC^0$ can still exhibit quantum advantage against $\AC^0$ even in the zero-error regime. 

In addition to these two robustness results we develop several new $\QAC^0$ primitives which may be of interest for further $\QAC^0$ constructions.

The first primitive is a random selector: given an $n$-bit string $x$, it returns a random coordinate $i\in[n]$ with $x_i=1$. In addition, we show that both its binary and unary encodings can be obtained in $\QAC^0$.

\begin{theorem}[Random selector; see also \Cref{thm:one-hot-selector} and \Cref{thm:binary-selector}]\label{thm:intro-selectors}
There are constant-depth, polynomial-size $\QAC$ circuits that, on nonzero $x\in\{0,1\}^n$, with probability at least $1-2^{-\polylog(n)}$, output a uniformly random location $i\in[n]$ with $x_i=1$, encoded either in binary or as an $n$-bit indicator vector $e_i$.
\end{theorem}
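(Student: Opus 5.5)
To build the random selector I would first realize the unary (one-hot) version and then derive the binary version from it. The unary construction has three stages.

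\textbf{Stage 1: weight estimation.} Using the parallel $W$-test and counting machinery of \cite{grier_morris_wu}, take polynomially many copies of $x$ and compute a constant-factor estimate of $|x|$. More precisely, for each scale $k\in\{0,\dots,\lceil\log n\rceil\}$, compute a flag that with probability $1-2^{-\polylog(n)}$ indicates whether $|x|\ge 2^k$.

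\textbf{Stage 2: random subsampling.} For each scale $k$ and each of $m=\polylog(n)$ independent trials $j$, prepare $n$ independent coins. Coin $i$ equals $1$ with probability $p_k=2^{-k}/4$, prepared by a single-qubit rotation on $\ket{0}$. Let $y^{(k,j)}_i = x_i\wedge c^{(k,j)}_i$. At the scale with $2^k\approx|x|$, the string $y^{(k,j)}$ has exactly one $1$ with constant probability. Whether $y^{(k,j)}$ has weight exactly one can be tested with an $\mathsf{EX}_1$-type primitive: $\OR$ is a single generalized Toffoli, and "exactly one" follows from the same $W$-test used for threshold/exact functions, all in constant depth.

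\textbf{Stage 3: selection.} Among the $m$ trials at the chosen scale, select the first successful one. This is a classical "first-true" selection, which lies in $\AC^0$ and can be implemented with Toffolis on ancillas. Output its vector $y$, which equals $e_i$. The first successful trial is determined by coins independent of the others, so conditioned on success the selected index is uniform over $\{i: x_i=1\}$ by symmetry: each position has the same inclusion probability, and we condition on the event that exactly one position is included. Failure occurs only if all $m$ trials fail or a subroutine errs, which happens with probability $2^{-\Omega(m)}+2^{-\polylog(n)}$.

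\textbf{Binary encoding.} Given $e_i$, the $t$-th bit of $i$ is $\bigvee_{i':\,\text{bit}_t(i')=1}(e_{i'})$. This is a single $\OR$, implemented with a generalized Toffoli and negations, and it can be done coherently for all $\lceil\log n\rceil$ bits in parallel. Any garbage is left in ancillas, which is acceptable because the output is a classical sample.

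\textbf{Main obstacle.} I expect the hardest part to be the exactly-one test: certifying in constant depth, with error $2^{-\polylog n}$, that a subsampled string has weight exactly $1$, and doing so without disturbing uniformity. I would try to reduce it to $\OR(y)\wedge\neg\THR_{\ge 2}(y)$, implementing $\THR_{\ge2}$ via the parallel $W$-test, whose failure probability is amplified by repetition. I would also verify that the errors of the test are symmetric under permutations of $x$'s support, so that uniformity is preserved up to the stated error.
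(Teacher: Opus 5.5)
Your skeleton matches the paper's proof: geometric subsampling at rates $2^{-r}$, an exactly-one test per trial, taking the first successful trial, uniformity by symmetry, and binary bits as ORs of the one-hot vector. But Stage 3 skips the step the paper identifies as the main difficulty. Computing the first-success flags $s_t$ is fine, since there are only $T=\polylog(n)$ trials and each flag is used at most $T$ times. However, ``output its vector $y$'' means writing $Y_i=\bigvee_t (s_t\wedge y^{(t)}_i)$ for every $i\in[n]$, so each $s_t$ (and, in your version, each scale flag from Stage 1) must feed $n$ different gates. That is $n$-fold fanout, equivalently an $n$-target controlled operation, and it is not known to be in $\QAC^0$; only $\polylog(n)$ fanout is available exactly \cite{grier_morris_wu}. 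That the selection lies in $\AC^0$ does not help, because $\AC^0$ gets unbounded fanout for free. You also cannot recompute $s_t$ separately for each coordinate, since it depends on entire masked vectors. The paper gets around this with phase kickback. Put $Y$ in $H^{\otimes n}\ket{0^n}$. For each basis value $a$ of $Y$, compute the single bit $h=\OR_t\bigl(s_t\wedge d_t\bigr)$ with $d_t=\OR_i(a_i\wedge y^{(t)}_i)$; this equals $\langle a,F\rangle \bmod 2$ because the selected vector $F$ has at most one $1$. This step uses each $s_t$ once and each $a_i$ only $T$ times. Applying $(-1)^h$, uncomputing, and applying $H^{\otimes n}$ then leaves $\ket{F}$ in $Y$.

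Two further steps fail as written, though both are easy to remove. Stage 1 and your proposed $\THR_{\ge 2}$ test both rely on the parallel $W$-test, which needs $\Omega(n^2)$ copies of its input to reach constant error: a weight-$2$ string is rejected by one padded $W$-test with probability only about $1/n^2$. The selector receives a single copy of $x$, and producing polynomially many copies of $x$, let alone of the coherent strings $y^{(k,j)}$, runs into the same fanout barrier. Instead, $\EX_1$ is computable exactly on a single copy by the small-weight exact/threshold construction of \cite[Corollary 16]{grier_morris_wu}. So the ``main obstacle'' you flag is an existing primitive and introduces no error at all. No weight estimate is needed either: order all $(\lceil\log_2 n\rceil+1)q$ trials across all scales and take the first success overall. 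Independence of trials plus symmetry within a trial makes the survivor uniform on $\{i:x_i=1\}$ conditioned on success. The single scale $r=\lceil\log_2|x|\rceil$ already gives success probability at least $1/(2e)$ per trial, hence failure probability at most $(1-1/(2e))^q$. Finally, you should state that each $x_i$ must be fanned out to the $\polylog(n)$ trials, and each output bit $e_{i'}$ to up to $\lceil\log_2(n+1)\rceil$ ORs for the binary encoding; both are within exact polylogarithmic fanout.
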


Our second primitive is an approximate counter that estimates the Hamming weight of an input binary string. While it is a combination of known geometric counting sketch \cite{ajtai_approximate_counting,viola_approximate_counting} and known $\QAC^0$ primitives \cite{grier_morris_wu}, we find it very interesting and document it here.

\begin{theorem}[Approximate counter; see also \Cref{thm:approx-counter}]
\label{thm:intro-approx-counter}
There is a constant-depth, polynomial-size $\QAC$ circuit that, on input $x\in\{0,1\}^n$, with probability at least $1-2^{-\polylog(n)}$, computes the Hamming weight of $x$ up to $1/\polylog(n)$ relative errors.
\end{theorem}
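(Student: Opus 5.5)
The plan is a geometric counting sketch in the style of \cite{ajtai_approximate_counting,viola_approximate_counting}, with the classical bounded-fan-in majority and threshold steps replaced by the $\QAC^0$ primitives of \cite{grier_morris_wu}: the exact-threshold and parallel $W$-test gadgets that decide, with error $2^{-\polylog(n)}$, whether a string has Hamming weight at least a given $k$ when $k\le\polylog(n)$.

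Fix $\delta=1/\polylog(n)$ and the scales $t_j=(1+\delta)^j$ for $j=0,1,\dots,J$, where $J=O(\log n/\delta)=\polylog(n)$. For each scale $j$ and each of $m=\polylog(n)$ independent repetitions, the circuit builds a random subset $S_{j,r}\subseteq[n]$ that contains each coordinate independently with probability $p_j=c/t_j$. The randomness comes from Hadamards on fresh ancillas, combined in $\AND$ gates to get dyadic probabilities; non-dyadic probabilities are rounded to dyadic ones, and the rounding is absorbed into the $\delta$ slack. The circuit then computes $y_{j,r}=\OR_{i\in S_{j,r}}x_i$ with a single generalized Toffoli, using fanout of $x$, which \cite{grier_morris_wu} gives to exponentially small error. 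We have
\[
\Pr[y_{j,r}=1]=1-(1-p_j)^{|x|}\approx 1-e^{-c|x|/t_j}.
\]
This probability is a monotone function of $|x|/t_j$, and it differs by a constant amount between $|x|\le t_j/(1+\delta)^{C}$ and $|x|\ge t_j(1+\delta)^C$ only when $\delta$ is constant. To reach relative error $1/\polylog(n)$ we therefore need to estimate the fraction $\frac1m\sum_r y_{j,r}$ to additive accuracy $\Theta(\delta)$, not just take a majority.

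This estimate is the main obstacle. Computing a $\polylog(n)$-bit count, or comparing it against a threshold to $1/\polylog$ precision, is exactly a $\polylog$-fan-in threshold. I would get it from \cite{grier_morris_wu}, which handles threshold on $\polylog(n)$ bits, and in fact exact counting of those bits, with $2^{-\polylog(n)}$ error, given copies of the input. The copies here are the $y_{j,r}$ themselves, which can be recomputed as many times as needed by copying the shared random bits. Chernoff with $m=\polylog(n)/\delta^2$ makes each empirical fraction accurate to $\pm\delta/10$ except with probability $2^{-\polylog(n)}$.

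The circuit then inverts the function $1-e^{-c|x|/t_j}$. Choose the index $j^\ast$ whose empirical fraction is closest to $1-e^{-1}$. This choice is a constant-depth selection among $J=\polylog(n)$ candidates, implemented by pairwise comparisons of $\polylog$-bit numbers, which are $\AC^0$ operations. From the empirical fraction $\hat f$, output
\[
t_{j^\ast}\cdot\frac{-\ln(1-\hat f)}{c}
\]
through a lookup table on $O(\log\log n)$ bits. Near $f=1-e^{-1}$ the map $f\mapsto -\ln(1-f)$ has bounded derivative, so additive error $\delta$ becomes relative error $O(\delta)$. The case $x=0$ is detected separately with one $\OR$.

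A union bound over all $\polylog(n)$ estimates gives overall failure probability $2^{-\polylog(n)}$. Every stage has constant depth and polynomial size.
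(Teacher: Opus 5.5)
Your route differs from the paper's and is sound in outline. However, the final inversion step as written does not give the claimed accuracy for small weights.

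\textbf{The gap: inverting the approximation instead of the exact law.} You record the exact law $\Pr[y_{j,r}=1]=1-(1-p_j)^{|x|}$, but you then invert its exponential approximation. Suppose the estimate is perfect, $\hat f=f_{j^\ast}$. Your output $t_{j^\ast}\cdot(-\ln(1-\hat f))/c=-\ln(1-\hat f)/p_{j^\ast}$ then equals
\[
|x|\cdot\frac{-\ln(1-p_{j^\ast})}{p_{j^\ast}}=|x|\Bigl(1+\frac{p_{j^\ast}}{2}+O\bigl(p_{j^\ast}^2\bigr)\Bigr).
\]
At the selected scale $-|x|\ln(1-p_{j^\ast})\approx 1$, so $p_{j^\ast}=\Theta(1/|x|)$. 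The output is therefore roughly $|x|+\tfrac12$, a relative error of order $1/|x|$. For $|x|=1$ (with $c\ge 1-e^{-1}$) you output about $1/(1-e^{-1})\approx 1.58$.

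So the $1/\polylog(n)$ relative-error guarantee fails for every $1\le|x|\lesssim 1/(2\delta)$, with $\delta$ your precision parameter. That is a range of $\polylog(n)$ weights, and your separate $\OR$ test for $x=0$ does not cover it.

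\textbf{How to repair it.} Your lookup table can encode any function of its $O(\log\log n)$ input bits, so tabulate the exact inverse $\hat f\mapsto\ln(1-\hat f)/\ln(1-\widetilde p_{j^\ast})$. Here $\widetilde p_j$ are the probabilities actually implemented, and $c\in(0,1)$ is a constant. Your bounded-derivative argument then works for every $|x|\ge 1$. Alternatively, count exactly when $|x|\le\polylog(n)$ using exact $\EX_k$ gates; this is what the paper does for $|x|\le 2B$.

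\textbf{Two smaller points.}
\begin{itemize}
\item The primitives you need from \cite{grier_morris_wu} are exact, not $2^{-\polylog(n)}$-error: $\polylog(n)$ fanout, and $\THR_k,\EX_k$ on $n$ bits for $k\le\polylog(n)$. Pad the $m=\polylog(n)$ probe bits of a scale with zeros to length $n$ and evaluate $\EX_k$ for each $k\le m$; this counts them exactly. No input copies, recomputation of the $y_{j,r}$, or parallel $W$-tests are needed.
\item $\AND$s of Hadamard bits only give probabilities $2^{-k}$. A $(1+\delta)$-spaced grid needs either a comparator $[U<a]$ on $O(\log n)$ uniform bits (a small DNF) or simply a single-qubit rotation, which $\QAC$ allows. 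In fact a factor-$2$ grid already suffices in your scheme, because the precision comes from estimating the bias, not from the spacing of the scales.
\end{itemize}

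\textbf{Comparison with the paper.} The paper keeps each coordinate with probability $B/T_j$, where $B=\Theta(\eps^{-2})$, and applies one exact $\THR_{\ge B}$ directly to the masked $n$-bit input. It outputs $T_{j'}$ for the largest passing scale and handles $|x|\le 2B$ by exact weight tests. It then boosts the constant success probability with a median of $\polylog(n)$ runs, computed by checking each of $\polylog(n)$ candidate values. Concentration of a single binomial with mean about $B$ supplies the precision, and the answer is read off the grid with no numerical inversion.

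Your version replaces that weight-$B$ threshold with $\polylog(n)/\delta^2$ one-bit $\OR$ probes per scale. The input is touched only by $\OR$s, and Chernoff gives $2^{-\polylog(n)}$ failure directly without a median step. The cost is more probes and an inversion table that must use the exact law $1-(1-p)^{|x|}$. Both approaches rest on the same exact polylogarithmic fanout and small-weight threshold primitives.
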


\subsection{Technical overview}\label{sec:overview}
Next, we give an overview  of our techniques and highlight the novel aspects of each construction. 

\paragraph{A discrete gate-set for $\QAC^0$.}
Standard quantum compilation techniques may result in a super-constant factor depth blow-up. In particular, synthesis over a fixed set of single-qubit gates generally improves accuracy by using a longer sequence of gates acting on the target qubit. To circumvent this our construction encodes the parameters of a target gate in a coherent sampling circuit whose size increases with the desired precision while its depth remains constant. 

Our starting point is a construction which for any $p \in [0, 1]$ and $\epsilon > 0$ produces a constant-depth circuit on $m  = O(\log^2{1/\epsilon})$ qubits, $Q$, such that 
\begin{itemize}
    \item $Q$ consists only of Hadamard and Generalized Toffoli gates and
    \item Measuring the final qubit of $Q\ket{0^m}$ yields $\ket{1}$ with probability $\tilde{p} \in [p - \epsilon, p + \epsilon]$
\end{itemize}
$Q$ is found via a binary search-like procedure wherein we iteratively construct a DNF whose acceptance probability for a uniformly random input approaches $p$ as more clauses are added. Finally, evaluating this DNF on $\ket{+}^m$ yields the desired single-qubit marginal which will be entangled with the ancilla used by the rest of the circuit.

Next, we make use of the fact that up to a $Z$ correction any real single-qubit gate can be written as
\begin{align*}
    U_{\theta} := \begin{pmatrix}
        \cos\theta & \sin\theta\\
        \sin\theta & -\cos\theta
      \end{pmatrix}
    = \cos\theta\,Z+\sin\theta\,X.
\end{align*}
for some $\theta \in [0, 2\pi]$. Moreover, we may write $U_{\theta}$ as a linear combination of $\{-X, X, -Z, Z\}$ using only non-negative coefficients. Using the coherent samplers and an additional Hadamard gate, we prepare a superposition of labels for these four operators, with probabilities given by the normalized coefficients. Next, we control on the label register to apply $\{-X, X, -Z, Z\}$ coherently, resulting in the application of $\Tilde{U_{\theta}}$ and not exactly $U_{\theta}$ due to the error incurred by the coherent sampler. After uncomputation of the sampler registers, the non-negative coefficients are always normalized in such a way that in the all-zeros ancilla branch $U_{\tilde{\theta}}$ has been applied and this branch has amplitude exactly $\frac{1}{2}$. Finally, we apply amplitude amplification to approximately obtain $\ket{0^m}\Tilde{U_{\theta}}\ket{\psi}$.

Thus, we can approximately implement arbitrary real single-qubit unitaries using Hadamard and generalized Toffoli in constant depth - with the $S$ gate and real single-qubit gates we can construct an approximate implementation of any single-qubit gate in constant depth.

\paragraph{Exactificaiton of the many-copy $W$-test.}
First, recall the techniques developed in \cite{grier_morris_wu} to simulate $\TC^0$ via a $\QAC^0$ circuit which acts on multiple copies of the input string. The primary technical tool underlying this simulation is a primitive called the ``$W$-test". The $W$-test is a $\QAC^0$ circuit which takes $x \in \{0, 1\}^n$ as input and produces a state $\ket{\psi_x}$ of the form
\begin{align*}
    \ket{\psi_x} &= \frac{n - 2|x|}{n}\ket{\psi_x^0}\ket{0} + \sqrt{1 - \bigg{(}\frac{n - 2|x|}{n}\bigg{)}^2}\ket{\psi_x^1}\ket{1}
\end{align*}
where $\ket{\psi_x^0}$ and $\ket{\psi_x^1}$ are some normalized states. The measurement outcome of this final qubit is treated as the output of the $W$-test. This means that
\begin{itemize}
    \item If $|x| = n/2$ then measuring the final qubit of $\ket{\psi_x}$ yields $\ket{1}$ with probability $1$
    \item If $|x| \not = n/2$ then measuring the final qubit of $\ket{\psi_x}$ yields $\ket{0}$ with probability at least $1/n^2$
\end{itemize}
Hence, the $W$-test computes $\EX_{n/2}$ with perfect completeness and $1 - 1/\poly(n)$ soundness. Running the $W$-test in parallel $m$ times will yield outcomes $b_1, \dots b_m \in \{0, 1\}$ and if $|x| = n/2$ then all of these bits will be $1$ with certainty. However, if $|x| \not = n/2$ then at least one $b_i$ will be equal to $0$ with probability $\bigg{(}1 - \frac{1}{n^2} \bigg{)}^{m}$. So in this case, if $m \gg n^2\log{n}$, then $\mathbb{P}[b_1 \land \cdots \land b_m = 1] \leq 1/n$. When $m$ copies of $x$ are given as input the $W$-test can be ran in parallel and then $\AND$ of their outputs yields $\EX_{n/2}(x)$ with high probability. Indeed, the error can be made inverse exponential with $n^{2 + \Omega(1)}$ copies of the input. However, standard techniques (i.e. parallel repetition) will fail to improve the error beyond this.

To go beyond inverse-exponential error and indeed remove it completely, we turn to exact amplitude amplification. Explicitly, our goal to take $\ket{\psi_x}^{\otimes m}$ and amplify the amplitude on the branch in which \emph{some} $W$-test yields $\ket{0}$, if this occurs with non-zero probability. In the case of $|x| = n/2$ nothing need be done, as the qubit corresponding to the value of $b_1\land \cdots \land b_m$ is already $\ket{1}$. If $|x| \not = n/2$ then the desired branch has weight $\Omega(1)$ when $m = \Omega(n^2)$, but precisely how much $\ket{\psi_x}^{\otimes m}$ overlaps with the desired state depends on $|x|$ - the quantity this construction is meant to reveal some information about. 

To address this apparent circularity we observe that if $|x|$ is promised to be either $n/2$ or some $k \in \{0, 1, \dots n\}\setminus \{n/2\}$ then $\ket{\psi_x}^{\otimes m}$'s outcome qubits will either be $\ket{1}^{\otimes m}$ or have some fixed constant amplitude on the branch in which \emph{some} output qubit is $\ket{0}$. Hence, for this promise problem exact amplitude amplification can be applied in constant depth to perfectly distinguish strings of weight $k$ from strings of weight $n/2$. With sufficiently many classical copies of $x$ we may solve each of these promise problems in parallel and in the case that $|x| = k \not = n/2$, the corresponding promise problem $W$-test will reveal this with certainty.
\subsection{Future directions}\label{sec:conclusion}
One immediate question left open by our work is whether the $S$ gate is still necessary for $\QAC^0$. Of course, any $\QAC$ circuit implemented with just $H$ and generalized Toffoli will result in a unitary with real entries. One reason to suggest that this gate may not be necessary for $\QAC^0$ is that no $\QAC^0$ construction for computing a Boolean function uses this gate and it has been proven to be unnecessary once the circuit depth can grow like $\Omega(n)$ \cite{aharonov_universal}.

Additionally, the fact that arbitrary $\QAC^0$ circuits can be compiled with this discrete gate-set may allow for new lower-bound approaches which take advantage of additional structure this new gate-set entails.

Similarly, the construction presented in \Cref{sec:exactification} combined with prior work \cite{grier_morris_wu} shows that every Boolean function known to be in $\BQAC^0$ is also computable in $\EQAC^0$. So, we leave open the question of whether $\BQAC^0 = \EQAC^0$. It is quite unclear how a generic exactification procedure might be implemented, given that it is still quite unclear which types of Boolean functions are even in $\BQAC^0$.

%% file: sections/preliminaries.tex
\section{Preliminaries}\label{sec:prelim}

For integer $n\ge1$, we use $[n]$ to denote the set $\{1,2,\ldots,n\}$.
We use $\mathbb{I}_n$ to denote the identity operator on $n$ qubits, and omit $n$ when the dimension is clear.

For every $x\in\{0,1\}^n$, we use $|x|$ to denote its Hamming weight.
For binary strings $x,y$ of equal length, we use $x\oplus y$ to denote their bitwise XOR.
For integers $k\ge0$, $\EX_k(x)$ and $\THR_{\ge k}(x)$ indicate whether $|x|=k$ and $|x|\ge k$, respectively.

Our circuit upper bounds (i.e., constructions) are \emph{uniform} circuits that can be efficiently extracted from our proofs and descriptions. 
All our circuit lower bounds hold with respect to \emph{non-uniform} circuits.
The (non-)uniformity is not our focus and we do not discuss it in detail.

\paragraph{Classical circuit complexity.}
In a Boolean circuit, every gate evaluates some Boolean function on the input wires and passes the result onward via its output wires.
For a Boolean circuit, 
\begin{itemize} 
\item \emph{depth} is defined as the maximal length from an input bit to output and \emph{size} is defined as the total number of gates in the circuit;
\item \emph{fanin} is the maximal number of input wires of each gate, \emph{fanout} is the maximal number of output wires of each gate, and \emph{gate set} is the set of different gates.
\end{itemize}
We use $\neg$ to denote the negation gate; and use $\AND$ (resp., $\OR$) to denote the AND (resp., OR) gate/function. Sometimes we use $\AND_n,\OR_n$ to highlight that the function takes $n$ bit input and it will be clear from the context when we omit it.

We will need the following standard classical circuit classes. Here we give informal description and refer readers to textbooks \cite{arora2009computational,jukna2012boolean} for formal definitions.
\begin{itemize}
\item $\NC^0$ is the set of languages that can be exactly decided by constant-depth polynomial-size Boolean circuits with bounded fanin and unrestricted fanout. We also use $\NC^0$ for maps with polynomially many output bits computed by such circuits.
\item $\AC^0$ is the set of languages that can be exactly decided by constant-depth polynomial-size Boolean circuits of gate set $\{\neg,\AND,\OR\}$.
\item $\AC^0[m]$ is the set of languages that can be exactly decided by constant-depth polynomial-size Boolean circuits of gate set $\{\neg,\AND,\OR,\MOD_m\}$.
$\ACC^0$ is the union of $\AC^0[m]$ for all $m\ge2$.
\item $\TC^0$ is the set of languages that can be exactly decided by constant-depth polynomial-size Boolean circuits of gate set $\{\neg,(\THR_{\ge k})_{k\ge0}\}$.
\end{itemize}
For decision problems, the following strict inclusions are known to be true for every fixed prime $p$: $\AC^0\subsetneq\AC^0[p]\subsetneq\TC^0$  \cite{ajtai_parity,fss_ac0,hastad_thesis,razborov,smolensky}. It is also easy to prove that $\ACC^0 \subseteq \TC^0$. While it is widely believed that this inclusion is strict, it is yet to be proven.

\paragraph{(Multi-qubit) quantum gates.}
The quantum fanout gate $\Fanout_n$ is an $(n+1)$-qubit unitary defined by
$$
\Fanout_n\colon\ket b\ket{x_1,\ldots,x_n}\to\ket b\ket{x_1\oplus b,\ldots,x_n\oplus b}
\quad\text{for every $x_1,\ldots,x_n,b\in\{0,1\}$.}
$$
Every Boolean function $f\colon\{0,1\}^n\to\{0,1\}$ naturally induces an $(n+1)$-qubit unitary $U_f$ by
$$
U_f\colon\ket{x_1,\ldots,x_n}\ket b\to\ket{x_1,\ldots,x_n}\ket{b\oplus f(x_1,\ldots,x_n)}
\quad\text{for every $x_1,\ldots,x_n,b\in\{0,1\}$.}
$$
The generalized Toffoli gate corresponds to $U_{\AND_n}$ for all $n\ge1$.
When clear from context, we sometimes use $\AND,\OR$ to denote $U_\AND,U_\OR$.

\paragraph{Quantum circuits.}
A quantum circuit $C$ is a product of layers of non-overlapping quantum gates, i.e., a single layer consists of quantum gates acting on disjoint sets of qubits and a circuit consists of multiple layers of gates:
\begin{itemize}
\item A circuit $C$ has \emph{depth} $d$ if $C=M_dM_{d-1}\cdots M_1$ where each $M_i$ is a product of gates operating on disjoint sets of qubits. The \emph{size} of $C$ is the total number of gates in $C$ and qubits that $C$ operates on.
\item Let $S$ be a set of unitaries. We say $C$ has \emph{gate set} $S$ if every gate of $C$ belongs to $S$, i.e. $C$ only consists of gates belonging to $S$.
\end{itemize}

Let $f\colon\{0,1\}^n\to\{0,1\}$.
We say $C$ decides $f$ with $a$ ancillas and error $\epsilon$ if for every $x\in\{0,1\}^n$, the last qubit of $C\ket{x}\ket{0^a}\ket0$ measures, in the computational basis, to $\ket{f(x)}$ with probability at least $1-\epsilon$.
In addition, we say 
\begin{itemize} 
\item it has completeness $c$ if for every $x\in f^{-1}(1)$, the output qubit measures to $\ket{1}$ with probability at least $c$; 
\item and has soundness $s$ if for every $x\in f^{-1}(0)$, the output qubit measures to $\ket{1}$ with probability at most $s$.
\end{itemize}
If $C$ decides $f$ with zero error, then we say $C$ exactly decides $f$ and we can assume without loss of generality $C\ket x\ket{0^a}\ket0=\ket x\ket{0^a}\ket{f(x)}$ by standard uncomputation, which incurs an insignificant constant blowup in the depth and size of the circuit.

\paragraph{$\QAC$ and $\QAC^0$.}
We primarily work with $\QAC$ circuits and the corresponding quantum circuit class $\QAC^0$.
See \cite{nielsen2010quantum,bera2010quantum} for a more comprehensive introduction.

A quantum circuit is a $\QAC$ circuit if it only uses single-qubit gates and generalized Toffoli gates.
We distinguish $\QAC^0$ as the following finer classes $\EQAC^0$ and $\BQAC^0$.
\begin{itemize}
\item $\EQAC^0$ is the set of languages that can be exactly decided by constant-depth polynomial-size $\QAC$ circuits.
\item $\BQAC^0$ is the set of languages that can be decided with error\footnote{This $1/3$ is not essential and can be boosted to arbitrarily small constant in a black-box way. If necessary, we sometimes also give more precise error bound in terms of completeness and soundness.} at most $1/3$ by constant-depth polynomial-size $\QAC$ circuits.
\end{itemize}

We emphasize that the standard definition of $\QAC^0$ allows for \emph{arbitrary} single-qubit gates, i.e., the gateset includes the set of all single-qubit gate . We allow ancillary qubits in our quantum circuits, the number of which will be upper bounded by the circuit size. We also remark that the quantum $\OR$ gate $U_\OR\colon\ket{x}\ket b\to\ket x\ket{b\oplus\OR(x)}$ is in $\QAC^0$.

\paragraph{Classical copies and composition.}
We write $\EQAC^0\circ\NC^0$ for Boolean function families of the form $f=g\circ h$, where $h$ is an $\NC^0$ map with polynomial output length and $g$ is computable in $\EQAC^0$. In our applications, $h$ is the copying map $x\mapsto(x,\ldots,x)$, which supplies polynomially many classical copies of $x$.
For a Boolean function $f\colon\{0,1\}^n\to\{0,1\}$ and an integer $m\ge1$, define its $m$-copy lift $f^{\uparrow m}\colon(\{0,1\}^n)^m\to\{0,1\}$ by
\[
    f^{\uparrow m}(x^{(1)},\ldots,x^{(m)})=
    \begin{cases}
        f(x^{(1)}) & \text{if }x^{(1)}=\cdots=x^{(m)},\\
        0 & \text{otherwise.}
    \end{cases}
\]
Thus $f^{\uparrow m}$ is a total Boolean function on $mn$ input bits, with inconsistent copies rejected.

\paragraph{Unitary approximation and block encodings.}
We use $\norm{\cdot}$ for the Euclidean norm of vectors and the operator norm of operators.
In this work we use the following notion of approximation when discussing a quantum circuit which approximates a unitary:
\begin{definition}[Approximation with ancillas]\label{def:approximation-with-ancillas}
Let $U$ be a unitary on $n$ qubits and let $C$ be a circuit using $n + a$ qubits. The circuit $C$ approximates $U$ to error $\eps$ if
\[
    \norm{C\ket{0^a}\ket\psi-\ket{0^a}U\ket\psi}\leq\eps
\]
for every $n$-qubit state $\ket\psi$.
\end{definition}

\Cref{def:approximation-with-ancillas} controls both the error on the input register and the error in uncomputing the ancillas.

\begin{definition}[Block encoding]\label{def:block-encoding}
The all-zero ancilla block of a unitary circuit $L$ using $a$ ancillas is the operator $B$ given by
\[
    B:=(\bra{0^a}\otimes\mathbb{I})L(\ket{0^a}\otimes\mathbb{I}).
\]
Equivalently, $B$ is determined by
\[
    L\ket{0^a}\ket\psi=\ket{0^a}B\ket\psi+\ket\Phi,
\]
where $\ket\Phi$ may depend on $\ket\psi$ and is orthogonal to the subspace in which all $a$ ancillas are zero. We call $L$ a \emph{block encoding} of $B$.
\end{definition}

Measuring the ancillas after applying $L$ gives the all-zero outcome with probability $\norm{B\ket\psi}^2$. In particular, if $B=U/2$ for a unitary $U$, this probability is $1/4$, and the conditional state is $U\ket\psi$. Thus, block-encoding $U/2$ does not by itself implement $U$ with ancillas returned to zero. \Cref{sec:discrete-basis} uses amplitude amplification to convert an approximate block encoding of this form into an approximation satisfying \Cref{def:approximation-with-ancillas}.

\paragraph{Prior constructions.}
We will make use of the following $\QAC^0$ constructions from~\cite{grier_morris_wu}.
\begin{itemize}[beginpenalty=10000]
    \item Preparation, with clean ancillas, of the $W$ state, i.e., a unitary $U$ which can be implemented by a constant-depth, polynomial-size $\QAC$ circuit satisfying
\[
        U\ket{0^n}\sket{0^{a(n)}} = \sket{W_n}\sket{0^{a(n)}}=\frac1{\sqrt n}\sum_{i=1}^n\ket{e_i}\sket{0^{a(n)}}
    \]
    for some $a(n) = \poly(n)$ \cite[Theorem 17]{grier_morris_wu}.
    \item Exact implementation of $\Fanout_m$ when $m=\polylog(n)$ \cite[Corollary 10]{grier_morris_wu}.
    \item Exact Computation of $\THR_k$ and $\EX_k$ on $n$ bits when $k\leq\polylog(n)$ \cite[Corollary 16]{grier_morris_wu}.
\end{itemize}

%% file: sections/discrete_basis.tex
\section{A discrete basis for \texorpdfstring{$\QAC^0$}{QAC0}}
\label{sec:discrete-basis}

In this section we show that arbitrary one-qubit gates can be approximately implemented with generalized Toffoli, $S$, and Hadamard gates. Consequently, this yields a discrete gate set for bounded-error $\QAC^0$ circuits. We first prove that arbitrary real one-qubit gates can be approximately implemented in $\QAC^0$ with just generalized Toffoli and Hadamard gates. It is then simple to show how to extend this construction to all complex single-qubit gates with $S$ gates.

First note that up to multiplication by a $Z$ gate (and global phase), every one-qubit unitary with real entries has the form
\begin{equation}
\label{eq:U-theta}
    U_{\theta} := \begin{pmatrix}
        \cos\theta & \sin\theta\\
        \sin\theta & -\cos\theta
      \end{pmatrix}
    = \cos\theta\,Z+\sin\theta\,X.
\end{equation}
In fact, let us suggestively expand the final expression so that it is a linear combination of unitaries (LCU) with non-negative coefficients:
\[
        U_{\theta} = \left(\frac{1 + \cos\theta}{2}\right) (+Z) + \left(\frac{1 - \cos\theta}{2}\right)(-Z) + \left(\frac{1 + \sin\theta}{2}\right)(+X) + \left(\frac{1 - \sin\theta}{2}\right)(-X).
\]
We follow the standard LCU outline:
\begin{enumerate}
\item \textbf{Preparation oracle:} For any bias $p \in [0,1]$, we construct a circuit $W_{p}$ for preparing approximations to $p$-biased states:
\[
    W_p \ket{0 \cdots 0} = \sqrt{1- \widetilde p} \ket{\phi_0}\ket0 + \sqrt{\widetilde p} \ket{\phi_1}\ket1
\]
for $\widetilde p \approx p$. The states $\ket{\phi_0}$ and $\ket{\phi_1}$ are unimportant. The relevant feature is that measuring the last qubit yields $\ket{1}$ with probability approximately $p$. See~\Cref{lem:coherent-bernoulli}. 

These $W_p$ unitaries will be the workhorses of the preparation oracle. Specifically, for $q_Z := \frac{1 - \cos \theta}{2}$ and $q_X := \frac{1 - \sin \theta}{2}$, we show $(H \otimes W_{q_Z} \otimes W_{q_X}) \ket{0 \cdots 0}$ correctly prepares (approximations to) the coefficients in $(1/2)U_\theta$: 
\[
     \sqrt{\frac{1-\widetilde q_Z}{2}} \ket{\Phi_{+Z}}  + \sqrt{\frac{\widetilde q_Z}{2}} \ket{\Phi_{-Z}}
    +\sqrt{\frac{1-\widetilde q_X}{2}} \ket{\Phi_{+X}} + \sqrt{\frac{\widetilde q_X}{2}} \ket{\Phi_{-X}}
\]
for orthonormal states $\ket{\Phi_{+Z}}$, $\ket{\Phi_{-Z}}$, $\ket{\Phi_{+X}}$, and $\ket{\Phi_{-X}}$.

\item \textbf{Selector oracle:} As it turns out, there is a simple constant-depth $\QNC^0$ circuit that applies Pauli $P \in \{\pm Z,\pm X\}$ based on the flag qubits of the states $\ket{\Phi_{P}}$ prepared above. Thus, after applying this ``selector'' oracle, we get
\[
     \sqrt{\frac{1-\widetilde q_Z}{2}} \ket{\Phi_{+Z}} \otimes Z \ket{\psi} - \sqrt{\frac{\widetilde q_Z}{2}} \ket{\Phi_{-Z}} \otimes Z\ket{\psi}
    +\sqrt{\frac{1-\widetilde q_X}{2}} \ket{\Phi_{+X}} \otimes X \ket{\psi} - \sqrt{\frac{\widetilde q_X}{2}} \ket{\Phi_{-X}} \otimes X\ket{\psi}.
\]
After uncomputing the preparation oracle, one can show that we are left with the state
\[
    \frac{1}{2} \ket{0 \cdots 0} \widetilde U_{\theta} \ket{\psi} + \ket{\Phi}
\]
where $\widetilde U_{\theta}$ is our approximation for $U_\theta$ with error determined by the error in $\widetilde q_X$ and $\widetilde q_Z$, and $\ket{\Phi}$ is orthogonal to all-zeros on the first register.
\item \textbf{Oblivious amplitude amplification:} Finally, we amplify the amplitude on $\ket{0 \cdots 0} \widetilde U_{\theta} \ket{\psi}$ so that it approximates $U_\theta$ rather than $(1/2) U_\theta$. To do this, we use a standard approach of negating the branch in which the ancillas are all zero in order to reflect about $\ket\Phi$. Since the angle between $\ket\Phi$ and $(1/2) \ket{0 \cdots 0} \widetilde U_{\theta} \ket{\psi} + \ket{\Phi}$ is approximately $\pi/6$, only a single round of amplitude amplification is required, leading to a constant-depth implementation.
\end{enumerate}

It's worth emphasizing that each step of the proof introduces little ancilla overhead (specifically, $O(\log^2(1/\epsilon))$ additional ancillas for an $\epsilon$-approximation), and only uses Hadamard, $S$, and generalized Toffoli gates. Steps 2 and 3 of this construction are a fairly straightforward application of LCU + amplitude amplification, but the coherent sampler construction does not seem to have appeared in prior literature, so let us start with that.

\begin{lemma}[Coherent Bernoulli sampler]
\label{lem:coherent-bernoulli}
For every $p\in[0,1]$ and $0<\delta<1/2$, there is a circuit
$W_p$ consisting only of Hadamard and generalized Toffoli gates such that
\begin{equation}
\label{eq:coherent-bernoulli}
    W_p\ket{0^r}\ket{0}
      = \sqrt{1-\widetilde p}\,\ket{\phi_0}\ket{0}
        +\sqrt{\widetilde p}\,\ket{\phi_1}\ket{1},
    \qquad
    \abs{\widetilde p-p}\leq\delta,
\end{equation}
where $r=O(\log^2(1/\delta))$. Furthermore, $W_p$ has
$O(\log^2(1/\delta))$ gates and depth independent of $p$ and $\delta$.
\end{lemma}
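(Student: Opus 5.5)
We plan to realize $\widetilde p$ as the acceptance probability of a read-once DNF evaluated on uniformly random bits. The random bits come from Hadamards applied to fresh qubits, and the output bit is computed coherently with generalized Toffoli gates. Concretely, set $k=\lceil\log_2(1/\delta)\rceil+1$ and take the dyadic truncation $p'=\sum_{j=1}^{k} b_j 2^{-j}$ of the binary expansion of $p$, so that $|p-p'|\le 2^{-k}\le\delta$. Use $k$ blocks of qubits; block $j$ holds $j$ qubits $y_{j,1},\dots,y_{j,j}$, for $r=\sum_j j=O(k^2)=O(\log^2(1/\delta))$ qubits in total. We first apply $H$ to all of them, so that each computational-basis string is equally likely. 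The event ``$y_{j,1}\cdots y_{j,j}=0^{j-1}1$'' has probability $2^{-j}$. We need these events to be disjoint, so we instead use the standard comparison trick on a single register $u=(u_1,\dots,u_k)$ of $k$ uniform bits. Define the DNF
\[
\bigvee_{j:\,b_j=1}\Bigl(\bar u_j\wedge\bigwedge_{i<j}(u_i\leftrightarrow b_i)\Bigr),
\]
which is true exactly when the real number $0.u_1u_2\cdots u_k$ is less than $p'$ (read $u_i$ as the complement convention as appropriate). Its acceptance probability is therefore exactly $p'$. The terms are mutually exclusive, since they disagree on the first index where $u$ departs from the prefix of $b$.

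Next we implement this DNF coherently. Each literal $u_i\leftrightarrow b_i$ is either $u_i$ or $\bar u_i$. Since $b$ is fixed classical data, negations can be avoided by wiring each clause directly to the correct polarity. Because the term set is mutually exclusive, the OR of the terms equals their XOR, so each term can be written by a generalized Toffoli onto the output qubit. The obstacle is that the terms share the variables $u_i$, so the Toffolis cannot be placed in the same layer. We will try two fixes. The first is to give each term its own copy of the relevant $u_i$'s: a Toffoli with a single control onto a fresh $\ket0$ acts as CNOT and copies a basis value. With only the gate set $\{H,\text{Toffoli}\}$, however, copying $u_i$ to $k$ targets takes depth $O(\log k)$ by doubling, which is not constant. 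The second, preferred fix is to avoid copying altogether by generating the copies as correlated randomness: prepare each $u_i$ as $k$ qubits in a GHZ-like state. That also requires fanout.

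So we would fall back on the fact that $X=HZH$, with $Z$ obtained as a one-qubit-target Toffoli conjugated appropriately. Fanout of a single bit to $k$ targets is equivalent, up to Hadamards, to a parity computation, which is itself unavailable in constant depth. The cleanest route is therefore to restructure so that each qubit is read by only $O(1)$ gates. We use $k$ \emph{independent} blocks, block $j$ consisting of $j$ fresh uniform bits, and define event $E_j$ as ``block $j$ equals a specific pattern'' (probability $2^{-j}$). We then output $\bigvee_{j:b_j=1}E_j$ computed as an OR. Independent events do not give probability exactly $p'$, so we would instead take the target value $1-\prod_{j}(1-\pi_j)$ and solve for it greedily: choose clause sizes $s_1\le s_2\le\cdots$ one at a time, each time picking the largest clause probability $2^{-s}$ that does not overshoot the target. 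This is the ``binary search'' over DNFs. Each step at least halves the remaining gap in the complementary product, so $O(\log(1/\delta))$ clauses of width $O(\log(1/\delta))$ reach error $\delta$, giving $r=O(\log^2(1/\delta))$.

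Finally, we compute the output. Each clause is a Toffoli (with $H$-conjugated $X$'s for negative literals handled by choosing the pattern all-ones, which removes the need for negations) into a fresh clause qubit, all in one layer. The OR of the clause qubits onto the output is computed as $\neg\AND(\neg\cdot)$; the required $X$ gates are $HZH$, where $Z$ is a controlled phase obtainable from a Toffoli with $H$ on the target, all of constant depth. This gives the claimed form with total gate count $O(\log^2(1/\delta))$ and depth independent of $p$ and $\delta$. The main step to verify carefully is the greedy convergence bound: that the remaining factor $(1-p)/\prod(1-2^{-s_j})$ shrinks geometrically with $s_j$ growing by $O(1)$ per step, so that $O(\log(1/\delta))$ clauses suffice and no clause needs width more than $\log(1/\delta)+O(1)$.
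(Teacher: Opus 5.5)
Your final construction is exactly the paper's proof: independent blocks of Hadamard-prepared bits, one AND per block into a fresh qubit, an OR/NOR onto the output, and a greedy choice of block sizes so that $\prod_j(1-2^{-s_j})$ approaches its target from above. The paper targets $\prod_j(1-2^{-r_j})\approx p$ and outputs the NOR, which is your version with $p$ and $1-p$ swapped; the dyadic detour is unnecessary, as is its side claim that small fanout/parity is ``unavailable'' in constant depth, which is not known in general.

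The step you flag concerns the additive gap $\prod_j(1-2^{-s_j})-(1-p)$, not the ratio, and it halves by minimality of $s$. The new factor removes exactly $2^{-s}P_{\mathrm{prev}}$ from the gap. Minimality gives $(1-2^{-(s-1)})P_{\mathrm{prev}}<1-p$, so the remaining gap is at most $2^{-s}P_{\mathrm{prev}}$. The same two bounds give $g_{\mathrm{prev}}\le 2^{1-s}$, hence $s<1+\log_2(1/\delta)$ before stopping.
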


\begin{proof}
The first goal of this proof will be to express $1-p$ as an OR of ANDs. If we can do this, then the quantum circuit $W_p$ satisfying \eqref{eq:coherent-bernoulli} is
\begin{center}

\newcommand{\cvdots}{\raisebox{6pt}{$\vdots$}}

\newcommand{\ortarg}{\gate[style={draw, circle, inner sep=-2pt, minimum size=3mm}]{\vee}}
\begin{quantikz}[row sep={0.62cm,between origins}, column sep=0.55cm]
  % ---------------- register r_1 ----------------
  \lstick[4]{$\ket{0^{r_1}}$} & \gate{H} & \ctrl{4} &          &                        \\
                    & \gate{H} & \ctrl{3} &          &                        \\
  \setwiretype{n}   & \cvdots  &   &          &                        \\
                    & \gate{H} & \ctrl{1} &          &                        \\
  \lstick{$\ket{0}$}&          & \targ{}  & \ctrl{7} &         \\
  % ------------- the remaining registers --------
  \setwiretype{n}   & \cvdots  & \cvdots  &          &                        \\
  % ---------------- register r_k ----------------
  \lstick[4]{$\ket{0^{r_k}}$} & \gate{H} & \ctrl{4} &          &                        \\
                    & \gate{H} & \ctrl{3} &          &                        \\
  \setwiretype{n}   & \cvdots  &  &          &                        \\
                    & \gate{H} & \ctrl{1} &          &                        \\
  \lstick{$\ket{0}$}&          & \targ{}  & \ctrl{1} &          \\
  % ---------------- output qubit ----------------
  \lstick{$\ket{0}$}&          & \gate{X} & \ortarg  & 
\end{quantikz}
\end{center}
where measuring the last qubit of this circuit yields $\ket{1}$ with probability 
\[
    \widetilde{p} = \prod_{j = 1}^k \left(1 - \frac{1}{2^{r_j}} \right) 
\]
Therefore, to prove the lemma, it suffices to choose the $r_i$ values so that $|\widetilde p - p| \le \delta$. Our choices will correspond to a sort of modified binary search algorithm, where we will pick the $r_j$ values so that $\prod_{j = 1}^k (1 - 2^{-r_j})$ approximates $p$ from above. That is, for the $i$th iteration of our algorithm, let 
\[
    \widetilde p_i := \prod_{j = 1}^i \left(1 - \frac{1}{2^{r_j}} \right)
\]
be our current estimate for $p$ with the guarantee that $\widetilde p_i \ge p$. We will measure our progress at each step by the gap
\[
    g_i := \widetilde p_i - p.
\]
To start, we set $\widetilde p_0 = 1$ with gap $g_0 = \widetilde p_0 - p = 1-p$. On iteration $i > 0$, pick the smallest integer $r_i > 0$ such that 
\[
    p \le \left(1 - \frac{1}{2^{r_i}} \right) \widetilde p_{i-1}.
\]
In other words, we shrink our estimate as much as possible without making it smaller than our target $p$. We claim this shrinks the gap by at least a factor of $2$. To see this, first note that the gap decreases by
\begin{equation}
\label{eq:gap_diff}
    g_{i-1} - g_i = \widetilde p_{i-1} - \widetilde p_i = \widetilde p_{i-1} - \left(1 - \frac{1}{2^{r_i}} \right) \widetilde p_{i-1} = \frac{1}{2^{r_i}}  \widetilde p_{i-1}.
\end{equation}
Meanwhile, we know that decreasing $r_i$ by one would make the estimate less than $p$. Thus, the gap $g_i$ is at most the difference of those two possible estimates: 
\begin{equation}
\label{eq:gap_bound}
    g_i \le \left(1- \frac{1}{2^{r_i}}\right) \widetilde p_{i-1} -\left(1- \frac{1}{2^{r_i-1}}\right) \widetilde p_{i-1} = \frac{1}{2^{r_i}} \widetilde p_{i-1}.
\end{equation}
Combining \eqref{eq:gap_diff} and \eqref{eq:gap_bound}, we get that $g_i \le (1/2) g_{i-1}$. We stop as soon as $g_i \le \delta$, so $O( \log(1/\delta))$ iterations suffice to $\delta$-approximate $p$. Finally, notice that combining \eqref{eq:gap_diff} and \eqref{eq:gap_bound}, we also get
\[
    g_{i-1} \le \frac{2}{2^{r_i}}\widetilde p_{i-1} \le \frac{2}{2^{r_i}}
\]
so $r_i < 1 + \log_2(1/\delta)$; otherwise, $g_{i-1} \le \delta$ and we would have stopped the algorithm at the previous iteration. Thus, each ancilla register $r_i$ has size $O(\log(1/\delta))$, leading to the final ancilla count $\sum_i (r_i+1) = O(\log^2(1/\delta))$.
\end{proof}
We now show how the previous construction can be used as the first step in the LCU synthesis of $U_{\theta}$:

\begin{lemma}[Direct synthesis of $U_\theta$]
\label{lem:real_approx}
For every $\theta\in\mathbb R$ and $\epsilon \in (0, 1/2)$ there exists a constant-depth circuit $C_\theta$ which acts on $n+1$ qubits, where $n = O(\log^2(1/\epsilon))$, and uses only Hadamard and generalized Toffoli gates such that
\begin{align*}
 \| C_{\theta}\ket{0^n}\ket{\psi} - \ket{0^n}U_{\theta}\ket{\psi}\| \leq \epsilon
\end{align*}
for all single-qubit states $\ket{\psi}$.
\end{lemma}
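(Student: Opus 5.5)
The plan is to follow the three-step LCU outline sketched at the start of \Cref{sec:discrete-basis}, with precision $\delta=\Theta(\epsilon)$ in \Cref{lem:coherent-bernoulli}.

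\textbf{Preparation.} Let $\mathrm{PREP}:=H\otimes W_{q_Z}\otimes W_{q_X}$, where $q_Z=\frac{1-\cos\theta}{2}$ and $q_X=\frac{1-\sin\theta}{2}$ both lie in $[0,1]$. Call the Hadamard qubit the \emph{type} qubit $t$, and call the last qubits of $W_{q_Z}$ and $W_{q_X}$ the \emph{sign} flags $s_Z$ and $s_X$.

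\textbf{Selection.} The selector $\mathrm{SEL}$ acts on $\ket\psi$ as follows.
\begin{itemize}
\item If $t=0$, it applies $Z$, together with a phase $-1$ when $s_Z=1$.
\item If $t=1$, it applies $X$, together with a phase $-1$ when $s_X=1$.
\end{itemize}
Each piece is a controlled-$X$, a controlled-$Z$ (that is, $H\cdot\mathrm{CNOT}\cdot H$), or a doubly-controlled $Z$. These are all generalized Toffolis conjugated by $H$, and by $X$ on controls that fire on $0$. The gate $X$ itself is available as the generalized Toffoli with no controls, as already used in the circuit of \Cref{lem:coherent-bernoulli}. So $\mathrm{SEL}$ has constant depth.

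\textbf{Block encoding.} Set $L:=(\mathrm{PREP}^\dagger\otimes\mathbb I)\,\mathrm{SEL}\,(\mathrm{PREP}\otimes\mathbb I)$. Projecting onto all-zero ancillas gives
\[
B=\tfrac12\bigl[(1-2\widetilde q_Z)Z+(1-2\widetilde q_X)X\bigr],
\]
since the garbage states $\ket{\phi_0},\ket{\phi_1}$ pair with themselves. Writing $a=1-2\widetilde q_Z$, $b=1-2\widetilde q_X$ and $c=\sqrt{a^2+b^2}$, we get $B=\frac c2 U_{\theta'}$ for some angle $\theta'$. Because $|a-\cos\theta|,|b-\sin\theta|\le2\delta$, we have $|c-1|=O(\delta)$ and $\norm{U_{\theta'}-U_\theta}=O(\delta)$.

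\textbf{One round of oblivious amplitude amplification.} Let $R=\mathbb I-2\proj{0^n}\otimes\mathbb I$ be the reflection about the all-zero ancilla state. It is implemented by $X$-conjugating a generalized Toffoli onto a target in the $\ket{-}$ state, which can be made by $H$ and the control-free Toffoli and then uncomputed; alternatively, $H$-conjugate the last ancilla and use a Toffoli controlled on the rest. Define $C_\theta:=-L R L^\dagger R L$, which has constant depth. The standard identity for the block of $LRL^\dagger RL$ gives, for $C_\theta$, the block
\[
3B-4BB^\dagger B=\Bigl(\tfrac{3c}{2}-\tfrac{c^3}{2}\Bigr)U_{\theta'}.
\]
Here we use that $B^\dagger B=\frac{c^2}{4}\mathbb I$, because $U_{\theta'}$ is a real reflection. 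The scalar $f(c)=\frac{3c-c^3}{2}$ satisfies $f(1)=1$ and $f'(1)=0$, so $1-f(c)=O(\delta^2)$. Hence
\[
C_\theta\ket{0^n}\ket\psi=f(c)\ket{0^n}U_{\theta'}\ket\psi+\ket\Phi,\qquad \norm{\Phi}=\sqrt{1-f(c)^2}=O(\delta).
\]
Combining with $\norm{U_{\theta'}-U_\theta}=O(\delta)$ and choosing $\delta=\epsilon/K$ for a suitable constant $K$ gives error at most $\epsilon$. The ancilla count is $1+2\cdot O(\log^2(1/\delta))=O(\log^2(1/\epsilon))$, with only $H$ and generalized Toffoli gates.

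\textbf{Main obstacle.} The delicate step is verifying that the one-round amplification is exact in structure. This requires that $B$ be exactly a scalar multiple of a unitary, which holds here because any real combination $aZ+bX$ is proportional to a reflection. It also requires that the ancilla garbage $\ket\Phi$ is only $O(\delta)$ and not constant. I would check the block identity carefully, including the global sign convention of $R$, and confirm that the leftover amplitude outside the all-zero branch is controlled by $\sqrt{1-f(c)^2}$.
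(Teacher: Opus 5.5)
Your proof is correct, and your circuit is the paper's. You use the same $\PREP=H\otimes W_{q_Z}\otimes W_{q_X}$, the same selector, and the same single round $-LRL^\dagger RL$; the paper writes $-LRLRL$, using that $L$ is Hermitian.

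The difference is in how you analyze the amplification step.
\begin{itemize}
\item The paper never identifies the block $\widetilde V_\theta$ as a multiple of a reflection. It expands $L\ket{0\cdots0}\ket\psi$ around the ideal target as $\frac12\ket{0\cdots0}U_\theta\ket\psi+\ket{\err}+\ket\Phi$. It then pushes the error terms through $C=2\Pi L+2L\Pi-L-4L\Pi L\Pi L$, using $U_\theta^2=\mathbb I$ to turn $\Pi L\ket\target$ back into $\frac12\ket\init$ plus an error. This yields the explicit bound $\epsilon$ with $\delta=\epsilon/16$.
\item You instead observe that the approximate block $B=\frac c2U_{\theta'}$ is itself exactly proportional to a reflection. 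So the exact oblivious-amplification identity $3B-4BB^\dagger B=f(c)U_{\theta'}$ holds with no error terms at all. All the error is then isolated in $|1-c|$ and $\norm{U_{\theta'}-U_\theta}$.
\end{itemize}
Your route is cleaner and shows that the amplitude defect is only quadratic, $1-f(c)=O(\delta^2)$. The final error is still $O(\delta)$, because of the leaked component $\sqrt{1-f(c)^2}$ and the angle error. The paper's route works directly with the ideal $U_\theta$ and gives concrete constants, at the cost of carrying error vectors through the algebra. Both arguments rest on the same structural fact: real combinations of $Z$ and $X$ square to a multiple of the identity.

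One small point to tidy up is the overall minus sign in $C_\theta$. It is not a harmless global phase here, because the lemma's bound compares against $\ket{0^n}U_\theta\ket\psi$ exactly. You can realize it with $(HX)^4=-\mathbb I$, as the paper does. Alternatively, drop it and run the construction for $\theta+\pi$, since $U_{\theta+\pi}=-U_\theta$.
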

\begin{proof}
Set
\begin{equation}
\label{eq:two-biases}
       q_Z=\frac{1-\cos\theta}{2},
       \qquad
       q_X=\frac{1-\sin\theta}{2},
       \qquad
       \delta=\frac{\epsilon}{16}.
\end{equation}
We first apply \Cref{lem:coherent-bernoulli} twice in parallel to prepare the two states
\begin{align*}
    \ket{\psi_Z} &:= W_{q_Z} \ket{0 \cdots 0, 0} = \sqrt{1-\widetilde q_Z} \ket{\alpha_0,0} + \sqrt{\widetilde q_Z} \ket{\alpha_1,1} \\
    \ket{\psi_X} &:= W_{q_X} \ket{0 \cdots 0, 0} = \sqrt{1-\widetilde q_X} \ket{\beta_0,0} + \sqrt{\widetilde q_X} \ket{\beta_1,1},
\end{align*}
where $\abs{\widetilde q_Z-q_Z}\leq\delta$ and $\abs{\widetilde q_X-q_X}\leq\delta$. Note that we have written $\ket{0 \cdots 0}$ to avoid the messiness of specifying the exact number of ancilla qubits (a number $r = O(\log^2 (1/\delta))$ in this case). We will continue to use this convention throughout.

Together with Hadamard, these two unitaries, $W_{q_Z}$ and $W_{q_X}$, constitute the preparation circuit in LCU. To see this, first recall our expansion of our target unitary $U_\theta$:
\[
        U_{\theta} = \left(\frac{1 + \cos\theta}{2}\right) (+Z) + \left(\frac{1 - \cos\theta}{2}\right)(-Z) + \left(\frac{1 + \sin\theta}{2}\right)(+X) + \left(\frac{1 - \sin\theta}{2}\right)(-X).
\]
Now, expand $\ket{+} \ket{\psi_Z} \ket{\psi_X}$ as
\[
     \sqrt{\frac{1-\widetilde q_Z}{2}} \ket 0 \ket{\alpha_0, 0}\ket{\psi_X} + \sqrt{\frac{\widetilde q_Z}{2}} \ket{0} \ket{\alpha_1, 1} \ket{\psi_X} 
    +\sqrt{\frac{1-\widetilde q_X}{2}} \ket 1 \ket{\psi_Z} \ket{\beta_0, 0} + \sqrt{\frac{\widetilde q_X}{2}} \ket{1} \ket{\psi_Z} \ket{\beta_1, 1}.
\]
In other words, the squared-magnitudes of the states in this decomposition are approximately those of our decomposition of $U_\theta$ (up to the attenuating $1/2$ factor). At $\delta=0$, this correspondence is exact.

Next, we apply the appropriate unitary (either $+Z$, $-Z$, $+X$, or $-X$) to our target state depending on the basis state to mirror the decomposition of $U_\theta$. This is simple to do with some constant-sized controlled-unitary operations (see \Cref{fig:LCU}). The effect of this selection unitary $\SEL$ on $\ket{+} \ket{\psi_Z} \ket{\psi_X} \otimes I$ can then be written as
\[
     \sqrt{\frac{1-\widetilde q_Z}{2}} \ket{\Phi_{+Z}} \otimes Z - \sqrt{\frac{\widetilde q_Z}{2}} \ket{\Phi_{-Z}} \otimes Z
    +\sqrt{\frac{1-\widetilde q_X}{2}} \ket{\Phi_{+X}} \otimes X  - \sqrt{\frac{\widetilde q_X}{2}} \ket{\Phi_{-X}} \otimes X.
\]
where we've renamed the basis states in our decomposition since their details are now unimportant.

From here, we follow the standard outline for LCU with amplitude amplification. Let's now show the remainder of these steps. First, we uncompute the preparation circuit, i.e., apply $\PREP^\dag$. Altogether, our LCU circuit $L$ is
\[
    L = (\PREP^\dag \otimes I) \circ \SEL \circ (\PREP \otimes I)
\]
where the full circuit is shown in \Cref{fig:LCU}. Note that $L$ is Hermitian since the individual terms of $\SEL$ are Hermitian.
\begin{figure}
\begin{center}
\newcommand{\blockgap}{0.1cm}
% ---- shared sizes -------------------------------------------------------
\newcommand{\gapcell}{\phantomgate{\hspace{\blockgap}}}  
\tikzset{
  prep/.style   = {minimum width=1.15cm},                          % common width
  prepw/.style  = {minimum width=1.15cm, minimum height=1.35cm},   % + common height
  ggbox/.style  = {dashed, rounded corners, inner xsep=4pt, inner ysep=5pt}
}

\begin{quantikz}[row sep={0.62cm,between origins}, column sep=0.6cm,
                 classical gap=0.07cm]
  \lstick{$\ket{0}$}
    & \gate[style=prep]{H}\gategroup[5,steps=1,style=ggbox]{$\PREP$}
    & \gapcell
    & \octrl{2}\gategroup[6,steps=4,style=ggbox]{$\SEL$}
                    & \octrl{2}  & \ctrl{4}   & \ctrl{4}
    & \gapcell
    & \gate[style=prep]{H}\gategroup[5,steps=1,style=ggbox]{$\PREP^{\dagger}$}
    &                            \\[2mm]
  % -------- first register: n bundled qubits + the z bit --------
  \lstick{$\ket{0^r}$}\setwiretype{b}
    & \gate[2,style=prepw]{W_{q_Z}}
    &               &            &            &            &
    &               & \gate[2,style=prepw]{W_{q_Z}^{\dagger}}
    &                            \\
  \lstick{$\ket{0}$}
    &               &            & \octrl{3}  & \ctrl{3}   &            &
    &               &            &            \\[4mm]
  % -------- second register: n bundled qubits + the x bit --------
  \lstick{$\ket{0^r}$}\setwiretype{b}
    & \gate[2,style=prepw]{W_{q_X}}
    &               &            &            &            &
    &               & \gate[2,style=prepw]{W_{q_X}^{\dagger}}
    &                            \\
  \lstick{$\ket{0}$}
    &               &            &            &            & \octrl{1}  & \ctrl{1}
    &               &            &            \\[2mm]
  % -------- output bit --------
  \lstick{$\ket{\psi}$}
    &               &            & \gate{+Z}  & \gate{-Z}  & \gate{+X}  & \gate{-X}
    &               &            &
\end{quantikz}
\end{center}
\caption{The LCU circuit $L = (\PREP^\dag \otimes I) \circ \SEL \circ (\PREP \otimes I)$ for $(1/2)U_\theta$.}
\label{fig:LCU}
\end{figure}

Projecting all-but-the-last qubit onto the all-zeros state, we get
\[
    \widetilde{V}_\theta := (\bra{0 \cdots 0} \otimes I) L (\ket{0 \cdots 0} \otimes I) = 
    \frac{1-\widetilde q_Z}{2} (+Z) + \frac{\widetilde q_Z}{2}(-Z) 
    + \frac{1-\widetilde q_X}{2}(+X) + \frac{\widetilde q_X}{2}(-X) 
\]
where $\widetilde{V}_\theta$ is our approximation for $(1/2) U_\theta$:
\begin{equation}
\label{eq:V_theta_approx}
 \left\|\widetilde{V}_\theta - (1/2)U_\theta\right\|
 \leq \abs{\widetilde q_Z-q_Z}+\abs{\widetilde q_X-q_X}
 \leq 2\delta
 =\frac{\epsilon}{8}.
\end{equation}
For obvious reasons, we will define our approximation for $U_\theta$ itself as $\widetilde U_\theta := 2 \widetilde V_\theta$. In other words, for every single-qubit state $\ket\psi$ we have
\begin{equation}
\label{eq:L_action}
     L \ket{0 \cdots 0} \ket\psi = \frac{1}{2} \ket{0 \cdots 0} \widetilde U_{\theta} \ket{\psi} + \ket{\Phi}
\end{equation}
where $\ket{\Phi}$ is an unnormalized state orthogonal to all-zeros on the preparation register, i.e., $(\bra{0 \cdots 0} \otimes I)\ket{\Phi} = 0$.

In fact, in order to analyze the effect of amplitude amplification, it will be useful to expand $L \ket{0 \cdots 0} \ket\psi$ to include an error term capturing the distance from our target unitary $U_\theta$:
\begin{equation}
\label{eq:L_expanded}
    L \ket{0 \cdots 0} \ket\psi = \frac{1}{2} \ket{0 \cdots 0} U_\theta\ket{\psi} + \ket{\err} + \ket\Phi
\end{equation}
where $\ket{\err} := \frac{1}{2} \ket{0 \cdots 0} (\widetilde U_{\theta} - U_\theta)\ket{\psi}$ is an error term (which depends on $\psi$) such that $\norm{\ket{\err}} \le \epsilon/8$ by \eqref{eq:V_theta_approx}.

The two reflections we will use for amplitude amplification are 
\begin{enumerate}
\item Reflection about $\ket\Phi$: $R := (I - 2\proj{0 \cdots 0})\otimes I$. Note that $R$ can be constructed from a single generalized Toffoli gate. This amplification is ``oblivious'' in the sense that we don't know the state $\ket\Phi$ but can nevertheless reflect about it.
\item Reflection about $L \ket{0 \cdots 0} \ket{\psi}$: $-L R L^\dag$. Recall that $L$ is Hermitian, so $-L R L$ suffices. The global phase can be prepared using $(HX)^4 = - I$.
\end{enumerate}

Given the large amplitude on $\ket{0 \cdots 0} \widetilde U_{\theta} \ket{\psi}$ for our state $L\ket{0 \cdots 0} \ket{\psi}$, it will turn out that only a single round of amplification is required. In other words, our entire circuit $C$ is: state preparation ($L$); then reflection about the orthogonal branch ($R$); then reflection about our prepared state ($-L R L$). 

Altogether, we have $C = - LRLRL$. Algebraically, we can analyze $C$ as follows. First, let's write $\Pi := \proj{0 \cdots 0} \otimes I$, so that $R = I - 2 \Pi$. 

Expanding the $R$ reflections in our circuit $C$, we get
\[
    C = - LRLRL = - L(I - 2 \Pi)L(I - 2 \Pi)L = 2\Pi L + 2L\Pi - L - 4L\Pi L\Pi L.
\]
Define $\ket\init := \ket{0 \cdots 0} \ket{\psi}$ and $\ket\target := \ket{0 \cdots 0} U_\theta \ket{\psi}$, and apply \eqref{eq:L_expanded} to $\ket\init$ and $\ket\target$, respectively, to get
\begin{align}
    \Pi L \ket{\init} &= \frac{1}{2} \ket{0 \cdots 0} U_\theta \ket{\psi} + \ket{\err_1}  = \frac{1}{2} \ket\target + \ket{\err_1} \\
    \Pi L \ket{\target} &= \frac{1}{2} \ket{0 \cdots 0} \ket{\psi}  + \ket{\err_2} = \frac{1}{2} \ket{\init} + \ket{\err_2} \label{eq:U_cancel}
\end{align}
Crucially, \eqref{eq:U_cancel} uses the fact that $U_\theta^2 = I$ (i.e., $U_\theta$ is both unitary and Hermitian). We now have all the pieces we need. Applying $C$ to our initial state, we get
\begin{align*}
     C \ket{\init} &= 2\Pi L\ket\init + L \ket\init - 4L\Pi L\Pi L \ket\init \\
        &= 2( (1/2)\ket\target + \ket{\err_1}) + L \ket\init - 4L\Pi L ( (1/2)\ket\target + \ket{\err_1}) \\
        &= \ket\target + L \ket\init - 2 L\Pi L \ket\target  - 4 L\Pi L \ket{\err_1} + 2\ket{\err_1} \\
        &= \ket\target + L \ket\init - 2 L ((1/2) \ket{\init} + \ket{\err_2})  - 4 L\Pi L \ket{\err_1} + 2\ket{\err_1}\\ 
        &=\ket\target- 2L\ket{\err_2}  - 4 L\Pi L \ket{\err_1} + 2\ket{\err_1}
\end{align*}
Therefore,
\[
    \norm{C \ket\init - \ket\target} \le \norm{2L\ket{\err_2}}  + \norm{4 L\Pi L \ket{\err_1}} + \norm{2 \ket{\err_1}} \le 2 \cdot \frac \epsilon 8 + 4 \cdot \frac \epsilon 8 + 2 \cdot  \frac \epsilon 8 = \epsilon.
\]

\end{proof}

\begin{theorem}\label{thm:HST_universal}
    Hadamard, $S$, and generalized Toffoli are universal for $\QAC^0$.
\end{theorem}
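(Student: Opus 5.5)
Our plan is to prove the quantitative form stated in \Cref{thm:intro-discrete-basis}. We take a $\QAC$ circuit $Q$ of depth $d$ and size $s$ and replace each single-qubit gate by a constant-depth Hadamard/$S$/Toffoli gadget. Then we control the accumulated error with a hybrid argument.

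\textbf{Step 1: decomposing an arbitrary single-qubit gate.} Every $V\in\mathsf U(2)$ can be written, up to global phase, as $V = D_1\,U_{\theta}\,D_2$ with $D_1,D_2$ diagonal phase gates. Concretely, using the $ZYZ$ Euler decomposition, $V \propto R_z(\alpha)R_y(\beta)R_z(\gamma)$, and $R_y(\beta)=Z\,U_{\beta/2}$ up to sign; $Z=S^2$ is already in the gate set. So it suffices to approximate a diagonal gate $\mathrm{diag}(1,e^{i\phi})$ using $H,S$, Toffoli, and real gates. For this we conjugate by $S$: we have $S H U_\theta H S^\dagger$-type identities, but more cleanly $\mathrm{diag}(1,e^{i\phi}) \propto R_z(\phi) = S\,H\,R_y(\phi)\,H\,S^\dagger$ up to phase (since $SHYHS^\dagger \propto Z$-axis rotation generator). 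Here $S^\dagger=S^3$, and $R_y(\phi)$ is again $Z\,U_{\phi/2}$. Thus every single-qubit gate equals, up to global phase (which is irrelevant for the circuit), a product of $O(1)$ gates from $\{H,S\}$ and at most three real gates $U_{\theta_j}$.

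\textbf{Step 2: local replacement.} We replace each $U_{\theta_j}$ by the circuit $C_{\theta_j}$ of \Cref{lem:real_approx} with precision $\epsilon' = \epsilon/(3s)$. Each gadget gets its own fresh block of $O(\log^2(s/\epsilon))$ ancillas, and has constant depth, consisting of $H$ and generalized Toffoli gates only. The gadgets for gates in the same layer of $Q$ act on disjoint qubits, so each layer of $Q$ becomes $O(1)$ layers. The resulting $\widetilde Q$ therefore has depth $O(d)$, size $O(s\log^2(s/\epsilon))$, and $O(n+s\log^2(s/\epsilon))$ qubits. The parameters $\theta_j$ are computable from the gate descriptions, and the $r_i$ in \Cref{lem:coherent-bernoulli} come from an explicit greedy procedure. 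Hence $\widetilde Q$ is efficiently constructible.

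\textbf{Step 3: error accounting, the main point to check.} The concern is that \Cref{lem:real_approx} only guarantees approximation on inputs of the form $\ket{0^n}\ket\psi$ with $\ket\psi$ a single-qubit state. Inside a circuit, the target qubit is entangled with the rest of the system, and other gadgets' ancillas may already be dirty. We therefore first upgrade the guarantee to operator norm on the subspace $\ket{0^n}\otimes\mathcal H_{\text{target}}\otimes\mathcal H_{\text{rest}}$. The lemma's error terms are linear in $\ket\psi$: $C_\theta\ket{0^n}\ket\psi-\ket{0^n}U_\theta\ket\psi = E\ket\psi$ for a fixed operator $E$. The bound follows from $\norm{E}\le\epsilon'$ applied to all $\ket\psi$, and $\norm{E\otimes I}=\norm{E}$. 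Now run the standard hybrid argument. Let $\ket{\Psi_t}$ denote the ideal state after the first $t$ gadgets, with all unused gadget ancillas at $\ket 0$. By the operator-norm bound, each substitution moves the state by at most $\epsilon'$, since it acts on its own fresh zero ancillas. Unitarity keeps these deviations from growing. At most $3s$ substitutions therefore give total error $\le\epsilon$, uniformly over inputs. This is exactly the bound $\norm{\widetilde Q - \mathbb I_m\otimes Q}\le\epsilon$ on the zero-ancilla subspace, as in \Cref{def:approximation-with-ancillas}.

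Finally, for bounded-error decision problems, an error of $\epsilon=1/\poly$ changes output probabilities by $O(\epsilon)$. This shows that $\BQAC^0$ is unchanged under this gate set.
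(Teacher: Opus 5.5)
Your route is the paper's: an exact Euler decomposition into three real gates $U_\theta$ interleaved with Clifford gates, then one fresh-ancilla copy of \Cref{lem:real_approx} per $U_\theta$ at precision $\epsilon$ divided by the number of such gates, then a summation of the errors. Your Step~3 is correct and more careful than the paper's one-line ``by the triangle inequality.'' It uses that the error operator $E=C_\theta(\ket{0^n}\otimes I)-\ket{0^n}\otimes U_\theta$ is linear, that $\norm{E\otimes I}=\norm{E}\le\epsilon'$, and a hybrid argument over fresh ancillas. The conjugation identity in Step~1, however, is wrong. Since $HYH=-Y$ and $S(-Y)S^\dagger=X$, you get $SHR_y(\phi)HS^\dagger=R_x(\phi)$, not $R_z(\phi)$. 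The correct form is $R_z(\phi)=HS^\dagger R_y(\phi)SH$, because $S^\dagger YS=X$ and $HXH=Z$. Alternatively, use the paper's $YXY$ Euler form with $R_X(\theta)=S^\dagger R_Y(\theta)S$ and $R_Y(2\theta)=U_\theta Z$, which gives $V=U_{\theta_1/2}SU_{\theta_2/2}S^3U_{\theta_3/2}S^2$ exactly for $V\in\mathsf{SU}(2)$.

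The substantive gap is the global phase. In Step~1 you discard it as irrelevant, but Step~3 then claims the phase-sensitive bound of \Cref{def:approximation-with-ancillas}, namely $\norm{\widetilde Q\ket{0^m}\ket\psi-\ket{0^m}Q\ket\psi}\le\epsilon$. What your circuit actually approximates is $e^{i\alpha}Q$, where $\alpha$ is the sum of all phases dropped gate by gate. Since $\norm{e^{i\alpha}Q-Q}=\abs{e^{i\alpha}-1}$, this discrepancy can be as large as $2$, and $H$, $S$, and Toffoli cannot multiply by an arbitrary scalar for free. The paper closes this in three moves. It writes each gate as $G_j=e^{i\alpha_j}V_j$ with $V_j\in\mathsf{SU}(2)$ decomposed exactly, and sets $Q=e^{i\alpha}Q_0$ with $\alpha=\sum_j\alpha_j$. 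It then adds one ancilla in $\ket0$ acted on by $D_\alpha=\mathrm{diag}(e^{i\alpha},e^{-i\alpha})=HS^3U_{-\alpha}S^3H$, so that $(D_\alpha\otimes Q_0)\ket0\ket\psi=\ket0Q\ket\psi$. This costs one extra $U_\theta$ gadget, which is why the per-gadget precision is $\epsilon/(3k+1)$. Either add this step, or weaken your conclusion to approximation up to a global phase. The weaker version still suffices for the decision-problem ($\BQAC^0$) sense of universality, but not for the operator-norm statement you set out to prove.
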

\begin{proof}
The proof follows almost immediately from \Cref{lem:real_approx}. First, write a single-qubit gate as $G=e^{i\alpha}V$, where $\alpha\in\mathbb R$ and $V\in\mathsf{SU}(2)$. The Euler decomposition gives the exact identity
\begin{equation}
\label{eq:euler}
    V= R_Y(\theta_1) R_X(\theta_2) R_Y(\theta_3) = R_Y(\theta_1) S^\dag R_Y(\theta_2) S R_Y(\theta_3)
\end{equation}
for angles $\theta_1, \theta_2, \theta_3 \in\mathbb R$, where $R_P(\theta)$ is the standard rotation gate about Pauli $P$: $R_P(\theta) := \cos(\theta/2) I - i \sin(\theta/2) P$. In particular, we have
\[
    R_Y(2 \theta) = \begin{pmatrix} \cos \theta & - \sin \theta \\ \sin \theta & \cos \theta \end{pmatrix} = \begin{pmatrix} \cos \theta & \sin \theta \\ \sin \theta & - \cos \theta \end{pmatrix} \begin{pmatrix} 1 & 0 \\ 0 & -1 \end{pmatrix} = U_\theta Z.
\]
Therefore, we can write \eqref{eq:euler} as
\[
    V= U_{\theta_1/2} Z S^\dag U_{\theta_2/2} Z S U_{\theta_3/2} Z = U_{\theta_1/2} S U_{\theta_2/2} S^3 U_{\theta_3/2} S^2
\]
so every determinant-one single-qubit gate can be approximated using three $U_\theta$ gates and six $S$ gates.

Consider a general $\QAC$ circuit $Q$ with $k\ge1$ single-qubit gates $G_j=e^{i\alpha_j}V_j$. Let $Q_0$ be the circuit obtained by replacing each $G_j$ with $V_j$, so that $Q=e^{i\alpha}Q_0$ for $\alpha=\sum_{j=1}^k\alpha_j$. To restore this phase, introduce one qubit initialized to $\ket0$ and apply
\[
    D_\alpha:=\begin{pmatrix}e^{i\alpha}&0\\0&e^{-i\alpha}\end{pmatrix}
    =H S^3 U_{-\alpha} S^3 H.
\]
Indeed, $S^3U_{-\alpha}S^3=\cos\alpha\,I+i\sin\alpha\,X$, and conjugating by $H$ gives the displayed diagonal matrix. Thus
\[
    (D_\alpha\otimes Q_0)\ket0\ket\psi=\ket0 Q\ket\psi,
\]
so the extra qubit returns to zero while supplying the required global phase.

To approximate $D_\alpha\otimes Q_0$ to error $\epsilon$, we use the above decomposition to obtain at most $3k+1$ gates of the form $U_\theta$, together with gates from the prescribed basis. We invoke \Cref{lem:real_approx} for each $U_\theta$ with error $\epsilon/(3k+1)$ and fresh work ancillas. By the triangle inequality, the resulting circuit is an $\epsilon$-approximation with $O(k \log^2(k/\epsilon))$ additional ancillary qubits.
\end{proof}

%% file: sections/exactification_1.tex
\section{Exact simulation of \texorpdfstring{$\TC^0$}{TC0} in \texorpdfstring{$\QAC^0$}{QAC0}}
\label{sec:exactification}
\label{sec:finite-spectrum-exactification}

In this section we show that the soundness error inherent in the $W$-test of
\cite{grier_morris_wu} can be eliminated entirely at the cost of additional
classical copies. Consequently, arbitrary threshold functions (and indeed
arbitrary $\TC^0$ functions) can be computed \emph{exactly} in $\QAC^0$ given
sufficiently many classical copies.

\subsection{Standard error reduction via parallel repetition }
Let us briefly recall the $W$-test and how it gives rise to a one-sided
Hamming weight tester:
\begin{theorem}[Corollary 19 of \cite{grier_morris_wu}]\label{thm:One-sided-tester}
    There is a family of $\QAC^0$ circuits
    $\{C_{n,k}\}_{n\geq 1,\,0\leq k\leq n}$ such that for all
    $x\in\{0,1\}^n$,
    \begin{itemize}
        \item If $|x|=k$, then $C_{n,k}$ accepts $\ket{x}$ with probability $1$.
        \item If $|x|\neq k$, then $C_{n,k}$ accepts $\ket{x}$ with probability at most $1-\frac{1}{n^2}$.
    \end{itemize}
\end{theorem}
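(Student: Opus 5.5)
The plan is to build $C_{n,k}$ from the $W$-state preparation of \cite[Theorem 17]{grier_morris_wu}, using a phase kickback that encodes $|x|$ into an amplitude. First prepare $\ket{W_n}$ on an $n$-qubit register with clean ancillas. Then, for each $i$, apply a controlled-$Z$ between the input qubit $x_i$ and the $i$th $W$-register qubit. These are $n$ disjoint two-qubit gates, so this layer has depth one; each controlled-$Z$ is $H$-conjugated Toffoli, so it is allowed in $\QAC$. The result is
\[
\frac{1}{\sqrt n}\sum_{i=1}^n(-1)^{x_i}\ket{e_i}.
\]
Next, uncompute with the inverse of the $W$-preparation circuit. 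The amplitude on the all-zero branch (register and ancillas) is then
\[
\braket{W_n}{\textstyle\frac{1}{\sqrt n}\sum_i(-1)^{x_i}e_i}=\frac{1}{n}\sum_i(-1)^{x_i}=\frac{n-2|x|}{n}.
\]
A generalized Toffoli (an $\OR$) onto a fresh qubit records whether the register is nonzero. This gives exactly the state $\ket{\psi_x}$ described in the overview, and the tester accepts iff this flag is $1$.

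This circuit directly tests $|x|=n/2$: when $|x|=n/2$ the zero branch vanishes, so it accepts with probability $1$. When $|x|\ne n/2$, we have $|n-2|x||\ge 1$, so it rejects with probability $(n-2|x|)^2/n^2\ge 1/n^2$.

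To handle a general $k$, pad the input so the target weight becomes the midpoint. Append hardwired constant bits: $n-2k$ ones if $k\le n/2$, or $2k-n$ zeros otherwise. The padded string has length $n'=2\max(k,n-k)$ and has weight $n'/2$ iff $|x|=k$. Constant bits are prepared with $X$ gates. The same analysis gives perfect completeness and rejection probability at least $1/n'^2\ge 1/(4n^2)$.

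The main obstacle is getting the stated bound $1/n^2$ rather than $1/(4n^2)$. My first attempt would avoid padding and instead use a non-uniform superposition: prepare weights $\alpha_i$ on the $n$ input positions plus one extra ``bias'' position. The goal is for the zero-branch amplitude to be an affine function of $|x|$ that vanishes exactly at $k$. The difficulty is that arbitrary amplitudes require nonuniform state preparation in $\QAC^0$. Alternatively, a finer count of the padded weight gives $|n'-2|x'||\ge 2$ whenever parity allows, which recovers a factor. If neither works, I would take the bound as stated up to this constant, since that is how Corollary 19 of \cite{grier_morris_wu} is used: the later amplification only needs $1/\poly(n)$ soundness.
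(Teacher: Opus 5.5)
Your proposal is correct and is essentially the paper's route: the $W$-test with all-zero-branch amplitude $(n-2|x|)/n$, plus hardwired padding that moves the target weight to the midpoint (the paper pads with $1^{n-j}0^j$ to length $2n$). The obstacle you raise is not real, so drop the hedge and the non-uniform-amplitude detour. Your padded length $n'=2\max(k,n-k)$ is always even, and $n'-2|x'|=2(k-|x|)$, so the parity always allows. Hence for $|x|\neq k$ the rejection probability is $4(k-|x|)^2/n'^2\ge 4/(2n)^2=1/n^2$, which is exactly the stated bound.
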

The circuit witnessing \Cref{thm:One-sided-tester} in the case
$k=n/2$ is shown in \Cref{fig:w-test}.

\begin{figure}[htbp]
    \centering
    \input{figures/w-test}
    \caption{The $W$-test circuit, corresponding to the case of $k = n/2$}\label{fig:w-test}
\end{figure}

Here, $U$ is a constant-depth unitary that prepares
$\ket{W}=\frac{1}{\sqrt{n}}\sum_{i=1}^n\ket{e_i}$ with $a=a(n)$ clean ancilla qubits, and the two-qubit gates are $\CZ$ gates. The final $\OR$ acts on all $n+a$ workspace qubits and the output qubit. On input $\ket{x}$ the state evolves as follows:
\begin{align*}
    \ket{x}\ket{0^{n+a}}\ket{0}
    &\xrightarrow{U}\ket{x}\ket{W}\ket{0^a}\ket{0}\\
    &\xrightarrow{\CZ}\ket{x}\otimes\frac{1}{\sqrt{n}}
        \sum_{i=1}^n(-1)^{x_i}\ket{e_i}\ket{0^a}\ket{0}\\
    &\xrightarrow{U^{\dag}}\ket{x}
        \bigl(\alpha_x\ket{0^{n+a}}+\sqrt{p_1(x)}\ket{\psi_x}\bigr)\ket{0}\\
    &\xrightarrow{\OR}\ket{x}
        \bigl(\alpha_x\ket{0^{n+a}}\ket{0}
        +\sqrt{p_1(x)}\ket{\psi_x}\ket{1}\bigr)
\end{align*}
where $\ket{\psi_x}$ is a normalized state orthogonal to
$\ket{0^{n+a}}$, and
\begin{equation*}
    \alpha_x=\frac{n-2|x|}{n},\qquad
    p_0(x)=\alpha_x^2,\qquad p_1(x)=1-p_0(x).
\end{equation*}
$p_0(x)$ is exactly the probability that the output qubit yields $\ket{0}$ when measured in the computational basis. Thus, the test accepts with certainty when $|x|=n/2$, as $p_0(x) = 0$ and otherwise rejects
with probability at least $1/n^2$.

Given $m$ classical copies of $x$, we can run $m$ independent $W$-tests in
parallel and take the $\AND$ of their output bits $b_1,\ldots,b_m$. The probability that some test fails is
\begin{equation*}
    p_{\text{fail}} = 1-\bigl(1-p_0(x)\bigr)^m
        =1-\left(1-\frac{(n-2|x|)^2}{n^2}\right)^m.
\end{equation*}
If $|x|=n/2$, every test accepts. Otherwise, the probability that all tests
accept is at most $(1-1/n^2)^m\leq e^{-m/n^2}$, so parallel repetition
reduces the soundness error exponentially in $m/n^2$.

\subsection{Hamming weight oblivious exact amplitude amplification}
However, parallel repetition does not in general, eliminate this error: whenever
$0<p_1(x)<1$, the $W$ test fails with non-zero probability. We overcome this obstacle via an amplitude amplification protocol which reflects the post-$W$-test state entirely into the subspace in which the output qubit is in the correct state depending on $x$. Since $p_{\text{fail}} = \Omega(1)$ (see \Cref{fig:parallel_w_prob}) when $|x| \not = n/2$ this amplitude amplification can be done in a constant number of rounds and therefore constant depth - when $|x| = n/2$, $p_{\text{fail}} = 0$, so amplitude amplification will leave the state unchanged. 
\begin{figure}[htbp]
    \centering
    \input{figures/graph}
    \caption{Hamming weight vs. acceptance probability of $\Omega(n^2)$ parallel $W$-tests.}
    \label{fig:parallel_w_prob}
\end{figure}

It may seem that this approach could not be made to work in $\QAC^0$ because applying precise amplitude amplification requires knowing exactly the angle between our starting state and the target state, but this overlap is a function of the Hamming weight of $x$ - the exact quantity we wish to compute. Nevertheless, we are able to construct a protocol which is able to make use of precise amplitude amplification while being oblivious to $|x|$.

 We first consider a promise problem wherein $|x|$ is either $k$ or $n/2$ for some fixed $k \not = n/2$. Then, we proceed in running some number, say $m_k$, parallel $W$-tests on $x$ and obtain a state $\ket{\psi_x}$. If $|x| = n/2$ then $\ket{\psi_x}$ will have zero overlap with the subspace corresponding to ``some W-tests fails", however if $|x| = k$ then we can choose $m_k$ such that this overlap is constant, i.e., $p_{\text{fail}} = \Omega(1)$. With an additional ancilla and an appropriate rotation applied, this overlap can be tuned to exactly $1/2$ \footnote{This technique is discussed at greater length in the proof of \cite[Corollary 9]{grier_morris_wu}.}. Finally, we proceed in applying amplitude amplification so that in the case of $|x| = k$ the resulting state is in the ``some W-tests fails" subspace and when $|x| = n/2$, the state is orthogonal to this subspace, allowing us to perfectly distinguish weight $k$ from weight $n/2$.
 
 If $|x| = j \not \in \{k, n/2\}$ then this procedure will not produce a meaningful result. However, if we had instead tested $j$ vs. $n/2$, then we would have been able to correctly conclude $|x| \not = n/2$. So, with sufficiently many classical copies of $x$ we may run the $k$ vs. $n/2$ test on $x$ for each $k \in \{0, 1, \dots n\} \setminus {n/2}$ and if $|x| \not = n/2$, then \emph{some} test will fail with certainty. Taking the $\AND$ of the outputs of each of these sub-tests will give $\EX_{n/2}(x)$ exactly. By choosing the number of copies separately for each $k$, we can solve all $n$ promise problems in parallel using $O(n^2)$ classical copies in total.

Let us now construct the circuit which perfectly  distinguishes the middle Hamming slice from any other fixed Hamming slice:
\begin{samepage}
\begin{lemma}\label{lem:single-slice}
    For every even $n\geq 2$ and every $k\in\{0,\ldots,n\}\setminus\{n/2\}$,
    there is a $\QAC^0$ circuit $Q_k$ using $m_k=O(n^2)$ classical copies
    of $x\in\{0,1\}^n$ and polynomially many zero-initialized ancillas such that
    \begin{itemize}
        \item $Q_k$ accepts $\ket{x}^{\otimes m_k}$ with certainty if $|x|=n/2$.
        \item $Q_k$ rejects $\ket{x}^{\otimes m_k}$ with certainty if $|x|=k$.
    \end{itemize}
\end{lemma}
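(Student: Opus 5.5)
The construction starts from $m_k$ parallel copies of the $W$-test of \Cref{fig:w-test}, one per classical copy of $x$. Before the final $\OR$s, each copy is in the state $\alpha_x\ket{0^{n+a}}+\sqrt{p_1(x)}\ket{\psi_x}$ with $\alpha_x=(n-2|x|)/n$. On the joint register of all $m_k$ copies, let $\Pi_{\mathrm{good}}$ be the projector onto ``every workspace register is $\ket{0^{n+a}}$''. Reflections about this subspace need only generalized Toffolis and $X$ gates: an $\AND$ of $\OR$s, phase-kicked, and then uncomputed. Call the unitary that prepares the joint state $A_x=(U^\dagger\,\CZ_x\,U)^{\otimes m_k}$. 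It is constant depth, and $A_x^\dagger=A_x$ up to reordering, since $\CZ$ is Hermitian. The key quantity is the weight outside $\Pi_{\mathrm{good}}$, namely $p_{\mathrm{fail}}(x)=1-(1-\alpha_x^2)^{m_k}$. For $|x|=n/2$ this is exactly $0$. For $|x|=k$ it is the fixed number $f_k:=1-(1-(n-2k)^2/n^2)^{m_k}$, which is known in advance because it depends only on $k$.

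The first step is to choose $m_k$. Take $m_k=\lceil c\,n^2/(n-2k)^2\rceil=O(n^2)$, so that $f_k\ge 1/2$; this works because $(1-\alpha^2)^{m}\le e^{-m\alpha^2}$ and $\alpha^2\ge 1/n^2$. The second step tunes the overlap exactly. Following the technique of \cite[Corollary 9]{grier_morris_wu}, add one ancilla $\ket0$ and apply a single-qubit rotation $R_\beta$ with $\cos^2\beta\cdot f_k=1/4$. Redefine the ``marked'' subspace as: some $W$-test workspace is nonzero \emph{and} the ancilla is $\ket0$. Then the marked amplitude of $B_x:=(R_\beta\otimes A_x)$ applied to the all-zero state is exactly $\sin(\pi/6)=1/2$ when $|x|=k$, and exactly $0$ when $|x|=n/2$. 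The reflection $S_{\mathrm{mark}}$ about the marked subspace is again an $\OR$ of workspace registers combined with a control on the ancilla, so it is a constant-depth $\QAC$ circuit.

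The third step is one round of exact Grover. The circuit is $G=-B_x S_0 B_x^\dagger S_{\mathrm{mark}}$, where $S_0$ reflects about the all-zero state; this step is identical to the analysis in \Cref{lem:real_approx}. If the marked amplitude is $\sin\phi$, the marked amplitude afterwards is $\sin 3\phi$. When $|x|=k$ we have $\phi=\pi/6$, so the final state lies entirely in the marked subspace. When $|x|=n/2$ we have $\phi=0$, so the state is unchanged (up to sign) and has zero marked component. Finally, compute the marked indicator into an output qubit with one more $\OR$ gate and negate it. The output is then $1$ with certainty when $|x|=n/2$ and $0$ with certainty when $|x|=k$. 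Every piece has constant depth and polynomial size, and the only non-$\QAC$ ingredient is an arbitrary single-qubit gate, which is allowed.

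The main obstacle is step two. Its correctness hinges on the two-dimensional invariant-subspace picture, and that picture holds only if the reflection operators act as claimed on the span of $B_x\ket{0}$ and its marked projection. To check this, I would verify explicitly that $S_{\mathrm{mark}}$ commutes with the structure of $B_x$, which is a tensor product of copies, and that the rotation angle $\beta$ exists because $f_k\ge 1/4$. I also need the parameter to be input-independent, which holds because the promise fixes $|x|\in\{k,n/2\}$. A minor subtlety is degenerate $k$ (e.g.\ $k=0$ or $k=n$), where $\alpha_x^2=1$ and $f_k=1$; the same choice of $\beta$ still covers these cases.
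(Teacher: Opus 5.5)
Your route is the paper's. You run $m_k=\Theta(n^2/(n-2k)^2)$ parallel $W$-tests, use an $x$-independent rotation on a fresh ancilla to make the marked amplitude exactly $1/2$ on weight $k$ and $0$ on weight $n/2$, and finish with one exact Grover round at angle $\pi/6$. The reflection about the prepared state is implementable because $x$ enters only as a control. The paper differs only cosmetically: it writes the $\AND$ of the test outputs into a flag and rotates the ancilla controlled on that flag, so its marking reflection is a single $\CZ$. Your uncontrolled rotation with marking done directly on the workspaces also works. The step you flag as the main obstacle needs no extra check: both reflections automatically preserve $\mathrm{span}\{\Pi B_x\ket{0},(\mathbb{I}-\Pi)B_x\ket{0}\}$ for the marked projector $\Pi$.

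However, you have the rejecting branch of the $W$-test backwards, and as written the argument fails. A $W$-test rejects exactly on the branch where its workspace returns to $\ket{0^{n+a}}$. That branch has amplitude $\alpha_x$, which vanishes at $|x|=n/2$; acceptance is the nonzero-workspace branch. So $\Pi_{\mathrm{good}}$ must project onto ``no workspace register is $\ket{0^{n+a}}$''. That projector's complement has weight $1-(1-\alpha_x^2)^{m_k}$, which matches your formula for $p_{\mathrm{fail}}$. Likewise, the marked subspace must be ``some workspace register is $\ket{0^{n+a}}$ and the ancilla is $\ket0$''. With your literal definition (some workspace \emph{nonzero}), the marked weight at $|x|=n/2$ is $\cos^2\beta=1/(4f_k)\neq 0$. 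At $|x|=k$ it is $(1-\alpha_x^{2m_k})\cos^2\beta\neq 1/4$. So the single Grover round is not exact, and for $k\in\{0,n\}$ it even swaps the two answers.

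The fix is local: swap ``zero'' and ``nonzero''. For each copy, test whether the workspace is all-zero with a generalized Toffoli on $X$-conjugated qubits. Take the $\OR$ over copies and the $\AND$ with the negated ancilla, then phase-kick (for $S_{\mathrm{mark}}$) or copy out (for the readout) and uncompute. This is still constant depth, and with it your calculation goes through unchanged.
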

\end{samepage}
\begin{proof}
    Fix $k\neq n/2$ and set
    \begin{equation*}
        \Delta:=|n/2-k|,\qquad
        m:=m_k=\left\lceil\frac{n^2}{4\Delta^2}\right\rceil.
    \end{equation*}
    Run one $W$-test on each of the $m$ input copies in parallel. Let $(b_1, \dots b_m) \in \{0, 1\}^m$ denote the results of each of these tests. On weight $k$, a single test rejects with probability
    $4\Delta^2/n^2$, so the probability of at least one rejection is
    \begin{equation*}
        q_k:=1-\left(1-\frac{4\Delta^2}{n^2}\right)^m
        \geq 1-e^{-1}.
    \end{equation*}
    This value depends only on $n$ and $k$. In contrast, on weight $n/2$
    the rejection probability is zero. Below, $p_{\text{fail}}$ denotes the rejection
    probability for this choice of $m$.

    Compute the $\AND$ of the output bits into a new ancilla initialized to $\ket{0}$ and then apply $X$ to that qubit, so that its value becomes
    $r=1-(b_1\wedge\cdots\wedge b_m)$. We will refer to this register as the $\AND$ register. The resulting state is
    \begin{equation*}
        \ket{\gamma_x}
        =\sqrt{p_{\text{fail}}}\ket{\gamma_0^x}\ket{1}
        +\sqrt{1-p_{\text{fail}}}\ket{\gamma_1^x}\ket{0}.
    \end{equation*}
    Where $\ket{\gamma_0^x}$ is a normalized state supported only on basis states satisfying $b_1\wedge\cdots\wedge b_m = 0$ and $\ket{\gamma_1^x}$ on basis states where $b_1\wedge\cdots\wedge b_m = 1$.

    For $\theta\in[0,\pi/2]$, define the controlled one-qubit gate
    \begin{equation*}
        \mathsf{C}U(\theta) = \mathsf{C}\bigl(R_Y(2\theta)Z\bigr)
        :=\proj{1}\otimes
        \begin{pmatrix}
            \cos{\theta} & \sin{\theta}\\
            \sin{\theta} & -\cos{\theta}
        \end{pmatrix}
        +\proj{0}\otimes\mathbb{I}.
    \end{equation*}
    Append an ancilla qubit initialized to $\ket{0}$ and apply
    $\mathsf{C}U(\theta_k)$ controlled on the $\AND$ register, where
    \begin{equation*}
        \cos^2{\theta_k}:= t_k = 1-\frac{1}{4q_k}\in(0,3/4]
    \end{equation*}
    The resulting state is
    \begin{align*}
        \ket{\Gamma_x}
        &=\sqrt{t_kp_{\text{fail}}}\ket{\gamma_0^x}\ket{10}
          +\sqrt{(1-t_k)p_{\text{fail}}}\ket{\gamma_0^x}\ket{11}\\
        &\qquad+\sqrt{1-p_{\text{fail}}}\ket{\gamma_1^x}\ket{00}.
    \end{align*}
    Note that if $|x| = n/2$ then $\ket{\Gamma_x} = \ket{\gamma_1^x}\ket{00}$, but if $|x| = k$ then $(\bra{\gamma_0^x}\bra{11})\ket{\Gamma_x} = 1/2$.
    
    Now, we apply $\CZ$ to the last two qubits of this state:
    \begin{equation*}
        \ket{\Gamma_x^-}
        :=(\mathbb{I}\otimes\CZ)\ket{\Gamma_x}
        =\ket{\Gamma_x}
          -2\sqrt{(1-t_k)p_{\text{fail}}}\ket{\gamma_0^x}\ket{11}.
    \end{equation*}
    Next, apply $R_x:=\mathbb{I}-2\proj{\Gamma_x}$. This reflection can be implemented via
    
    \begin{equation*}
    R_x=V_{k,x}\bigl(\mathbb{I}-2\proj{0^B}\bigr)V_{k,x}^{\dag}.
    \end{equation*}
    where $V_{k, x}$ is the unitary that prepares $\ket{\Gamma_x}$ from $\ket{x}\otimes\ket{0^B}$. Note that $V_{k, x}$ only controls on the copies of $x$, so implementing $V_{k, x}$ requires no knowledge of $x$ or its Hamming weight.

    If $|x|=k$, then $(1-t_k)p_{\text{fail}}=1/4$ and
    \begin{equation*}
        \braket{\Gamma_x}{\Gamma_x^-}
        =1-2(1-t_k)p_{\text{fail}}=\frac{1}{2}.
    \end{equation*}
    Therefore,
    \begin{align*}
        R_x\ket{\Gamma_x^-}
        &=\ket{\Gamma_x^-}
          -2\braket{\Gamma_x}{\Gamma_x^-}\ket{\Gamma_x}\\
        &=\ket{\Gamma_x^-}-\ket{\Gamma_x}\\
        &=-2\sqrt{(1-t_k)p_{\text{fail}}}\ket{\gamma_0^x}\ket{11}\\
        &=-\ket{\gamma_0^x}\ket{11}
    \end{align*}
    If $|x|=n/2$, then $p_{\text{fail}}=0$, so the $\CZ$ gate has no effect
    and reflection yields
    \begin{equation*}
        R_x\ket{\Gamma_x^-}=-\ket{\Gamma_x}
        =-\ket{\gamma_1^x}\ket{00}
    \end{equation*}
    So, when $|x| = n/2$ the state is unchanged up to a global phase. Finally, apply $X$ to the final ancilla qubit, so that it is $1$ when $|x|=n/2$ and $0$ when $|x|=k$. Thus, if final ancilla qubit is taken as the output then this circuit solves the $k$ vs. $n/2$ Hamming weight problem with zero-error on all inputs satisfying the promise.
\end{proof}

To achieve a more general test in which weight $j \not = n/2$ strings are accepted
and weight $k$ strings are rejected, we can append the fixed string $1^{n-j}0^j$
to each input copy. The resulting $2n$-bit string
$y=x1^{n-j}0^j$ satisfies $|y| = n$ when $|x|=j$, and has weight $n+k-j$ when
$|x|=k$. Applying \Cref{lem:single-slice} for inputs of length $2n$ gives the desired
behavior with $O(n^2)$ copies of $x$. The padded bits can be hard-coded into the circuit with $X$ gates.

\begin{theorem}\label{thm:zero_error_ex}
    For even $n$, $\EX_{n/2}$ can be computed with zero error in $\QAC^0$
    given $m=O(n^2)$ classical copies of its input. In particular,
    $\EX_{n/2}\in\EQAC^0\circ\NC^0$, and the corresponding lifted function
    $\EX_{n/2}^{\uparrow m}$ is in $\EQAC^0$.
\end{theorem}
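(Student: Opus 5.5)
The plan is to run, in parallel, one copy of the circuit $Q_k$ from \Cref{lem:single-slice} for every $k\in\{0,\ldots,n\}\setminus\{n/2\}$, each on its own disjoint block of $m_k$ input copies, and then take the $\AND$ of the $n$ output qubits with a single generalized Toffoli. By the lemma, the output qubit of $Q_k$ is exactly $\ket1$ whenever $|x|=n/2$. If instead $|x|=k_0\neq n/2$, then $Q_{k_0}$ outputs exactly $\ket0$. The output qubits of different $Q_k$'s live on disjoint registers (disjoint copies and disjoint fresh ancillas), so the global state is a tensor product across $k$. Hence the $\AND$ is deterministic: when $|x|=n/2$ every factor has its output qubit in $\ket1$, and when $|x|=k_0$ the $k_0$ factor has its output qubit in $\ket0$, forcing the $\AND$ to $0$ no matter what the other, non-promise tests produce. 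This gives zero error on every input.

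The resource count comes next. The depth is constant, since each $Q_k$ has constant depth independent of $k$ and we add one $\AND$ layer. The size is polynomial. For the number of copies, note that $m_k=\lceil n^2/(4\Delta_k^2)\rceil$ with $\Delta_k=|n/2-k|$. Summing over $k$ gives $\sum_k m_k\le n+\frac{n^2}{4}\cdot 2\sum_{\Delta\ge1}\Delta^{-2}=O(n^2)$. This is where the specific choice of $m_k$ in the lemma pays off, rather than the crude bound $m_k=O(n^2)$ for every $k$, which would only give $O(n^3)$. If one also wants clean ancillas, so that the output is $\ket{x}\ket{0^a}\ket{f(x)}$, the standard uncomputation remark in the preliminaries applies: copy the answer out with a CNOT and run the whole construction in reverse.

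Next comes the lifted function $\EX_{n/2}^{\uparrow m}$, which must also reject inconsistent copies. I would compute, for each coordinate $i\in[n]$, whether all $m$ copies agree on bit $i$. This is the bit $\AND_m(x^{(1)}_i,\ldots,x^{(m)}_i)\vee\neg\OR_m(x^{(1)}_i,\ldots,x^{(m)}_i)$, and generalized Toffoli gates compute it exactly in constant depth. Taking the $\AND$ of all these consistency bits together with the output of the circuit above gives the lifted function. One point needs care: on inconsistent inputs the $Q_k$ subcircuits receive non-identical copies, and their outputs may be arbitrary superpositions. However, the consistency bit is a classical deterministic function of the computational-basis input, living on a separate register. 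So the final $\AND$ is $0$ on every branch whenever the consistency bit is $0$, and exactness is preserved. Membership $\EX_{n/2}\in\EQAC^0\circ\NC^0$ then follows by composing with the copying map.

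The main obstacle is justifying that the non-promise tests cannot spoil the answer. Specifically, I need that $Q_k$ on a weight-$n/2$ input outputs $\ket1$ with certainty, which the lemma states. I then need that on weight $j\notin\{k,n/2\}$ the output is some superposition whose $\AND$ with the decisive $\ket0$ from $Q_j$ is still exactly $0$. The tensor-product structure across disjoint blocks handles this cleanly, so I expect the argument to reduce to bookkeeping plus the $O(n^2)$ copy sum.
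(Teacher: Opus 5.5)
Your proposal is correct and follows essentially the same route as the paper. You run $Q_k$ on disjoint blocks of $m_k=\lceil n^2/(4\Delta^2)\rceil$ copies, take the $\AND$ of the $n$ outputs, bound the copy sum by $n+\frac{n^2}{2}\sum_{\Delta}\Delta^{-2}=O(n^2)$, and handle the lifted function with a per-coordinate all-zero/all-one consistency flag; your tensor-product argument that the non-promise tests cannot spoil the answer simply makes explicit what the paper leaves implicit.
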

\begin{proof}
    Run the circuit $Q_k$ from \Cref{lem:single-slice} for every
    $k\in\{0,\ldots,n\}\setminus\{n/2\}$ in parallel, using disjoint
    input copies. Taking the $\AND$ of each of the output bits of each $Q_k$ computes $\EX_{n/2}$ exactly using $O(n^2)$ copies of the input.
    Indeed, the total number of copies is
    \[
        \sum_{k\ne n/2}m_k
        =2\sum_{\Delta=1}^{n/2}\left\lceil\frac{n^2}{4\Delta^2}\right\rceil
        \leq n+\frac{n^2}{2}\sum_{\Delta=1}^{n/2}\frac{1}{\Delta^2}
        =O(n^2).
    \]
    To compute the lifted function on arbitrary blocks $x^{(1)},\ldots,x^{(m)}$, first check that all blocks agree. For each coordinate $i\in[n]$, test whether the $m$ bits $x_i^{(1)},\ldots,x_i^{(m)}$ are all zero or all one. These tests use generalized Toffoli and single-qubit gates, run in parallel across coordinates, and preserve the input blocks. Taking the $\AND$ of their results gives a consistency flag in constant depth and $O(nm)$ size. Run the above circuit on the blocks and take the $\AND$ of its output with this flag. Equal blocks give the correct value of $\EX_{n/2}$, while unequal blocks are rejected with certainty. This proves $\EX_{n/2}^{\uparrow m}\in\EQAC^0$.
\end{proof}

The same padding argument gives exact tests for every $\EX_j$. Taking an
$\OR$ over the appropriate Hamming weights then computes any symmetric function,
including every threshold function, using polynomially many classical
copies. As in \cite[Section~4.2]{grier_morris_wu}:
\begin{corollary}
    $\TC^0\subseteq\EQAC^0\circ\NC^0$.
\end{corollary}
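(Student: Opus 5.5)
The plan is to reduce $\TC^0$ to exact computation of symmetric functions on polynomially many copies, and then to appeal to the standard simulation of \cite[Section~4.2]{grier_morris_wu}.

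\textbf{Step 1: exact $\EX_j$ for every $j$.} Fix $j\in\{0,\ldots,n\}$. I will pad each copy of $x$ with the hard-coded string $1^{n-j}0^j$ (implemented by $X$ gates on fresh ancillas). This gives a $2n$-bit string $y$ with $|y|=|x|+n-j$, so $|y|=n=(2n)/2$ exactly when $|x|=j$. Applying \Cref{thm:zero_error_ex} at input length $2n$ then computes $\EX_j(x)$ with zero error, using $O(n^2)$ copies, constant depth and polynomial size.

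\textbf{Step 2: exact symmetric functions.} A symmetric $f$ (in particular $\THR_{\ge k}$ or $\neg\THR_{\ge k}$) can be written as $f(x)=\bigvee_{j\in S_f}\EX_j(x)$. I will compute every $\EX_j$ for $j\in S_f$ in parallel on disjoint copies, which uses $O(n^3)$ copies in total. Each output is exact and, after uncomputation, lies in a clean basis state. A single $\OR$ gate then computes $f$ exactly. Because at most one $\EX_j$ fires, the $\OR$ is also a parity, though that fact is not needed here.

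\textbf{Step 3: composing a $\TC^0$ circuit.} Take a depth-$d$, polynomial-size threshold circuit. I will process it layer by layer, with each gate of layer $\ell$ obtaining its input bits exactly. The key point is that, because every subcircuit is exact and returns its ancillas to $\ket{0}$, a computed wire value is a classical basis state and can be copied freely by CNOT-style fanout with generalized Toffoli gates. The circuit must therefore be built so that whatever is copied consists only of clean classical bits.

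\textbf{Step 4: the copy budget.} To evaluate a threshold gate in layer $\ell$ on its wires we need polynomially many copies of those wires. These copies in turn require polynomially many evaluations of layer $\ell-1$, each of which consumes its own copies of $x$. With constant depth $d$, the total is $\poly(n)^{d}=\poly(n)$ classical copies of $x$, supplied by the $\NC^0$ copying map. Alternatively, since layer outputs are clean bits, fanout of them can be done by a single generalized Toffoli per copy within depth $O(1)$, because each target is separate.

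The main obstacle is making sure that no step introduces error or garbage. The intermediate exact tests must fully uncompute, so that the wire values are genuine basis states and the next layer's reflections $R_x$ in \Cref{lem:single-slice}, which depend only on their own input registers, still work. I will verify that the $\EQAC^0$ convention $C\ket{x}\ket{0^a}\ket0=\ket{x}\ket{0^a}\ket{f(x)}$ holds after each layer, which follows by compute–copy–uncompute. The claim $\TC^0\subseteq\EQAC^0\circ\NC^0$ then follows.
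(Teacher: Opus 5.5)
Your main line matches the paper's. You pad with $1^{n-j}0^j$ to get exact $\EX_j$ from \Cref{thm:zero_error_ex}, and take an $\OR$ over $S_f$ to get every threshold gate exactly. You then compose as in \cite[Section~4.2]{grier_morris_wu}: each gate of layer $\ell$ gets its many input copies by re-running the layer-$(\ell-1)$ gadgets on fresh, disjoint copies of $x$, for $\poly(n)^{O(d)}$ copies in total. That recomputation argument (the first half of your Step 4) is sound, and it is all you need.

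The fanout claims in Step 3 (``can be copied freely by CNOT-style fanout'') and in the ``Alternatively'' sentence of Step 4 are false and must go. Step 3 presents free copying as ``the key point'', so the proof has to rest on recomputation alone.
\begin{itemize}
\item Copying a bit into $m$ targets with generalized Toffolis uses $m$ gates that all share one control qubit. Gates in a single layer must act on disjoint qubits, so this takes depth $m$, or $O(\log m)$ with a doubling tree, not $O(1)$.
\item Exact fanout in $\QAC^0$ is only known for $\polylog(n)$ targets.
\item Being a classical basis state does not help. A constant-depth map $\ket{b}\ket{0^m}\mapsto\ket{b}\ket{b^m}$ with clean ancillas would, by linearity, prepare cat states and hence put parity in $\QAC^0$.
\item If such fanout were available, you could fan out $x$ itself and get $\TC^0\subseteq\EQAC^0$ with no classical copies at all.
\end{itemize}
The only copying your argument may use is the single CNOT in compute--copy--uncompute.

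You also need one addition. $\EQAC^0$ requires $g$ to be a total function decided exactly on \emph{every} input, including block tuples that are not copies of a single $x$. On such inputs the promise behind \Cref{lem:single-slice} fails: the copies fed to a gadget can have different Hamming weights. The single round of amplitude amplification then need not produce a deterministic output. Fix this as in the proof of \Cref{thm:zero_error_ex}. Compute a consistency flag over all input copies, and output the $\AND$ of it with the composed circuit's output. Then $g$ is the lift of the $\TC^0$ function with inconsistent copies rejected.
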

\begin{corollary}\label{cor:exact_separation}
    For all primes $p$, $\EQAC^0 \not \subseteq \AC^0[p]$. Additionally, $\EQAC^0 \not \subseteq \mathsf{ACC}^0$ under the assumption that $\mathsf{ACC}^0 \subset \TC^0$.
\end{corollary}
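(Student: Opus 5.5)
The plan is to reduce \Cref{cor:exact_separation} to the exact simulation of symmetric functions already established, and then invoke the classical lower bounds of Razborov and Smolensky, following \cite[Section~4.2]{grier_morris_wu}. The function witnessing the separation will be a copy-lifted symmetric function. For a fixed prime $p$, pick a prime $q\neq p$ and consider $\MOD_q$ on $n$ bits. By Razborov--Smolensky, $\MOD_q\notin\AC^0[p]$. The lifted function $\MOD_q^{\uparrow m}$ with $m=\poly(n)$ is also not in $\AC^0[p]$: restricting the $mn$ inputs by identifying $x^{(1)}=\cdots=x^{(m)}=x$ is a projection, which is wiring only, so an $\AC^0[p]$ circuit for the lift would give one for $\MOD_q$ of polynomially related size.

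The main step is showing $\MOD_q^{\uparrow m}\in\EQAC^0$. First, the padding remark after \Cref{lem:single-slice} gives exact $\QAC^0$ tests for every $\EX_j$, $0\le j\le n$, each using $O(n^2)$ copies. Concretely, pad to length $2n$ so that weight $j$ becomes the middle slice, then apply \Cref{thm:zero_error_ex}. I will run these tests for all $j\equiv 0\pmod q$ in parallel on disjoint copies, using $O(n^3)$ copies in total, and take the $\OR$ of their exact output bits. Since exactly one $\EX_j$ is $1$, this $\OR$ equals $\MOD_q(x)$ with zero error. For inconsistent copies, I reuse the coordinatewise consistency flag from the proof of \Cref{thm:zero_error_ex} and $\AND$ it in. The one delicate point is that exactness must survive composition. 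Each subcircuit can be assumed to output $\ket{x}\ket{0^a}\ket{f(x)}$ after uncomputation, as noted in the preliminaries. The $\OR$ and $\AND$ of computational-basis bits are then exact, and the whole circuit remains constant depth and polynomial size. So $\MOD_q^{\uparrow m}\in\EQAC^0\setminus\AC^0[p]$.

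For the second claim, the argument would run as follows. Assume $\ACC^0\subsetneq\TC^0$, and let $f\in\TC^0\setminus\ACC^0$. By the preceding corollary, $f^{\uparrow m}\in\EQAC^0$ for some polynomial $m$, and the same projection argument shows $f^{\uparrow m}\notin\ACC^0$, since $\ACC^0$ is closed under projections.

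The step I expect to be the main obstacle is bookkeeping rather than conceptual. Tests for different $j$ use different copy counts and padding lengths, and one must verify that the padding constants are hard-wired with $X$ gates so that no error is reintroduced. The cleanest route is simply to note that every $\EX_j$ test has perfect completeness and soundness on all inputs. The $\OR$ of exact bits is then exact, with no amplitude analysis needed beyond \Cref{lem:single-slice}.
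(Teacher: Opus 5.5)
Your proof is correct. It has the same skeleton as the paper's: exact $\EQAC^0$ membership of a copy-lifted function, the projection $x^{(1)}=\cdots=x^{(m)}=x$ to strip the copies, then a classical lower bound. The difference is the choice of witness functions.

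\textbf{The paper's witness.} The paper uses $\EX_{n/2}^{\uparrow m}$ for both claims, which is exactly what \Cref{thm:zero_error_ex} already provides with $O(n^2)$ copies. The price is a standard but unstated reduction: $\EX_{n/2}\in\AC^0[p]$ (resp.\ $\mathsf{ACC}^0$) would place Majority, and hence all of $\TC^0$, in $\AC^0[p]$ (resp.\ $\mathsf{ACC}^0$). This contradicts Razborov--Smolensky (resp.\ the hypothesis).

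\textbf{Your witness for the first claim.} Using $\mathsf{MOD}_q^{\uparrow m}$ costs an extra construction layer: padded $\EX_j$ tests for every $j\equiv 0\pmod q$, an $\OR$, and the global consistency flag, with $O(n^3)$ copies. In return you can cite Smolensky's $\mathsf{MOD}_q\notin\AC^0[p]$ directly, with no reduction.

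\textbf{Your witness for the second claim.} Since $\mathsf{MOD}_q\in\mathsf{ACC}^0$, you need a second witness $f\in\TC^0\setminus\mathsf{ACC}^0$ for the conditional claim. That part then depends on two things:
\begin{itemize}
\item the corollary $\TC^0\subseteq\EQAC^0\circ\NC^0$, which the paper asserts only by reference to \cite{grier_morris_wu};
\item a step you leave implicit: writing $f=g\circ(\text{copy})$ and AND-ing the consistency flag into $g$ to get $f^{\uparrow m}\in\EQAC^0$.
\end{itemize}
The paper's single witness $\EX_{n/2}$ avoids that dependency and is more economical. Your route is more modular and uses the lower bounds in their most direct form.

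\textbf{One small precision.} The padded $\EX_j$ tests are exact only on consistent copies; the circuits of \Cref{lem:single-slice} give no guarantee when blocks disagree. So on inconsistent inputs, exactness comes entirely from the final $\AND$ with the flag being deterministically $0$, not from the tests being exact ``on all inputs''. Your construction already includes that flag, so nothing breaks.
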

\begin{proof}
    For a fixed prime $p$, if there exist a constant $k > 0$ such that $\EX_{n/2}^{\uparrow n^k} \in \AC^0[p]$, then $\EX_{n/2}\in \AC^0[p]$. To see this, suppose $C$ is the polynomial-size $\AC^0[p]$ circuit which computes $\EX_{n/2}^{\uparrow n^k}$. We can construct an $\AC^0[p]$ circuit $\Tilde{C}$ which computes $\EX_{n/2}$ by first fanning the input bits out $n^k$ times and then evaluating $C$ on the resulting string, $\overbrace{x \circ \cdots \circ x}^{n^k} \in \{0, 1\}^{n^{k + 1}}$:
    \begin{figure}[ht]
    \centering
    \input{figures/classical_fanout_reduction}
    \caption{If $C$ computes $\EX_{n/2}^{\uparrow n^k}$ then we can construct a circuit $\Tilde{C}$ which computes $\EX_{n/2}$ by simply copying the input and feeding it into $C$.}
    \label{fig:fanout_reduction}
\end{figure}
    However, $\EX_{n/2}\in \AC^0[p] \implies \AC^0[p] = \TC^0$ - which we know to be false via \cite{razborov, smolensky}. Similarly, $\EX_{n/2}\in \mathsf{ACC}^0 \implies \mathsf{ACC}^0 = \TC^0$.
\end{proof}

%% file: figures/w-test.tex
\scalebox{0.9}{
    \begin{quantikz}
    \lstick{\ket{x_1}}  & & & \control{} &  & \ \ldots\  & & & & & \rstick{\ket{x_1}}\\
    \lstick{\ket{x_2}} & & & & \control{} & \ \ldots\  & & & & &  \rstick{\ket{x_2}}\\
    \lstick{\vdots \ \ \ } & & & & & \ \ldots \  & & & & &\rstick{\ \vdots \ \ }\\
    \lstick{\ket{x_n}}  & & & & & \ \ldots\  & \control{} & & & &\rstick{\ket{x_n}}\\
    \lstick[wires=4]{$\ket{0}^{n + a}$} & \gate[wires=4]{U} & &  \ctrl{-4} & & \ \ldots \ & & & \gate[wires=4]{U^{\dag}} &\ctrl{4} & \ \rstick[wires=5]{\ \ket{\psi_{x}}}\\ 
    & & &  & \ctrl{-4} & \ \ldots \  & & & & \ctrl{3} &\\
     & & & & & \ \ldots \ & & & & \ctrl{2} &\\
     & & & & & \ \ldots \ & \ctrl{-4} & & & \ctrl{1}&\\
     \lstick{\ket{0}}&  & & & & \ \ldots \ & & & & \gate{\lor} &
  \end{quantikz}
  }

%% file: figures/graph.tex
\begin{tikzpicture}
  \def\N{20}
  \pgfmathsetmacro{\Mid}{\N/2}
  \pgfmathsetmacro{\Constant}{.63}
  % Smaller powers give a broader central dip and more label space.
  \def\CurvePower{24}
  % Intersections of the schematic curve with y=\Constant.
  \pgfmathsetmacro{\Offset}{\Mid*sqrt(1-pow(1-\Constant,1/\CurvePower))}
  \pgfmathsetmacro{\NearLeft}{\Mid-\Offset}
  \pgfmathsetmacro{\NearRight}{\Mid+\Offset}
  \pgfmathsetmacro{\LabelX}{0.97*\N}

  \begin{axis}[
      width=11cm,
      height=6.2cm,
      xmin=0, xmax=\N,
      ymin=0, ymax=1.05,
      axis lines=left,
      xlabel={Hamming weight $|x|$},
      ylabel={$p_{\text{fail}}$},
      % Symbolic labels with schematic horizontal spacing.
      xtick={0,\NearLeft,\Mid,\NearRight,\N},
      xticklabels={$0$,$\frac{n}{2}-1$,$\frac{n}{2}$,$\frac{n}{2}+1$,$n$},
      ytick={0,1},
      tick label style={font=\small},
      label style={font=\small},
      tick align=outside,
      clip=false
    ]

    % Broader schematic curve for a less crowded x-axis.
    \addplot[
      black,
      thick,
      no marks,
      domain=0:\N,
      samples=601
    ]
    {1-pow(4*x*(\N-x)/(\N*\N),\CurvePower)};

    % Keep the original horizontal level; meet the curve at both ends.
    \addplot[
      blue!70!black,
      densely dotted,
      very thick,
      no marks
    ] coordinates {(0,\Constant) (\NearLeft,\Constant)};
    \addplot[
      blue!70!black,
      densely dotted,
      very thick,
      no marks,
      forget plot
    ] coordinates {(\NearRight,\Constant) (\N,\Constant)};

    % Vertical guides connect the ticks to the intersections.
    \addplot[
      blue!70!black,
      densely dotted,
      very thick,
      no marks,
      forget plot
    ] coordinates {(\NearLeft,0) (\NearLeft,\Constant)};
    \addplot[
      blue!70!black,
      densely dotted,
      very thick,
      no marks,
      forget plot
    ] coordinates {(\NearRight,0) (\NearRight,\Constant)};

    \node[
      anchor=north east,
      font=\small,
      text=blue!70!black,
      yshift=-2pt
    ] at (axis cs:\LabelX,\Constant)
      {$p_{\text{fail}}(x) > 1 - 1/e$};
  \end{axis}
\end{tikzpicture}

%% file: figures/classical_fanout_reduction.tex
% Reduction C -> \widetilde C in the proof of Corollary 20.
% Both panels contain the same schematic gate network and input ports.
% A horizontal row of inputs fans out diagonally into the n^k blocks of C;
% the first and last blocks and bits 1, 2, and n are shown explicitly.
% Include with \input{figures/classical_fanout_reduction}.
\begingroup
\usetikzlibrary{arrows.meta}
\begin{tikzpicture}[
    x=1cm, y=0.92cm,
    line width=0.5pt, line cap=round, line join=round,
    panel/.style={draw=black!30, fill=black!3, rounded corners=6pt,
                  line width=0.45pt},
    gate/.style={circle, draw, fill=white, minimum size=6mm,
                 inner sep=0pt, font=\large},
    pin/.style={font=\footnotesize, inner sep=0pt},
    block/.style={font=\scriptsize, inner sep=1pt},
    every node/.style={outer sep=0pt}
]
    \begin{scope}[xshift=-4.60cm]
        \path[panel] (-3.20,-0.15) rectangle (3.30,3.55);
        \node[font=\Large, anchor=base] at (0.05,3.87) {$C$};
    \end{scope}
    \begin{scope}[xshift=4.60cm]
        \path[panel] (-3.40,-2.25) rectangle (3.50,3.55);
        \draw[draw=black!40, dashed, rounded corners=5pt, line width=0.4pt]
            (-3.20,0.25) rectangle (3.30,3.35);
        \node[font=\Large, anchor=base] at (0.05,3.87) {$\widetilde C$};
        \node[font=\small] at (2.85,2.97) {$C$};

        % Each source forks directly into its matching input in both blocks.
        % Parallel diagonal bundles avoid right-angle bends. Background gaps
        % distinguish crossings from the three genuine source junctions.
        \foreach \xx in {-1.05,-0.10,1.15}
            \draw (\xx,-1.65) -- ({\xx-1.50},0.80);
        \foreach \xx in {-1.05,-0.10,1.15}
            \draw[preaction={draw=black!3, line width=2.5pt}]
                (\xx,-1.65) -- ({\xx+1.50},0.80);
        \foreach \xx/\bit in {-1.05/1,-0.10/2,1.15/n} {
            \fill (\xx,-1.65) circle (1.2pt);
            \node[below=4pt, inner sep=0pt] at (\xx,-1.65) {$x_{\bit}$};
        }
        \node at (0.525,-1.94) {$\cdots$};
    \end{scope}

    % The gate network is drawn only once in the source and used in both panels.
    % The central AND joins different input blocks of the same C.
    \foreach \side/\shift in {left/-4.60,right/4.60} {
        \begin{scope}[xshift=\shift cm]
            \coordinate (\side-a) at (-2.55,0.80);
            \coordinate (\side-b) at (-1.60,0.80);
            \coordinate (\side-c) at (-0.35,0.80);
            \coordinate (\side-d) at (0.45,0.80);
            \coordinate (\side-e) at (1.40,0.80);
            \coordinate (\side-f) at (2.65,0.80);
            \coordinate (\side-or1) at (-1.60,1.90);
            \coordinate (\side-and) at (0.05,1.90);
            \coordinate (\side-or2) at (1.70,1.90);
            \coordinate (\side-out) at (0.05,2.84);
            \draw (\side-a) -- (\side-or1) -- (\side-out);
            \draw (\side-b) -- (\side-or1);
            \draw (\side-c) -- (\side-or1);
            \draw (\side-c) -- (\side-and) -- (\side-out);
            \draw (\side-d) -- (\side-and);
            \draw (\side-d) -- (\side-or2) -- (\side-out);
            \draw (\side-e) -- (\side-or2);
            \draw (\side-f) -- (\side-or2);
            \foreach \port in {a,b,c,d,e,f}
                \fill (\side-\port) circle (1.2pt);
            \node[gate] at (\side-or1) {$\vee$};
            \node[gate] at (\side-and) {$\wedge$};
            \node[gate] at (\side-or2) {$\vee$};
            \node[gate] at (\side-out) {$\wedge$};
        \end{scope}
    }

    \begin{scope}[xshift=-4.60cm]
        \node[pin] at (-2.55,0.47) {$x_1^{(1)}$};
        \node[pin] at (-1.60,0.47) {$x_2^{(1)}$};
        \node[pin] at (-0.35,0.47) {$x_n^{(1)}$};
        \node[pin] at (0.45,0.47) {$x_1^{(n^k)}$};
        \node[pin] at (1.40,0.47) {$x_2^{(n^k)}$};
        \node[pin] at (2.65,0.47) {$x_n^{(n^k)}$};
        \node[font=\footnotesize] at (-0.975,0.47) {$\cdots$};
        \node[font=\footnotesize] at (2.025,0.47) {$\cdots$};
        \node[font=\footnotesize] at (0.05,1.18) {$\cdots$};
        \node[block] at (-1.45,0.15) {block $1$};
        \node[block] at (1.55,0.15) {block $n^k$};
    \end{scope}
    \begin{scope}[xshift=4.60cm]
        % Labels sit outside each diagonal bundle so the wires remain visible.
        \node[pin, anchor=east] at (-2.64,0.52) {$x_1$};
        \node[pin, anchor=east] at (-1.69,0.52) {$x_2$};
        \node[pin, anchor=east] at (-0.44,0.52) {$x_n$};
        \node[pin, anchor=west] at (0.54,0.52) {$x_1$};
        \node[pin, anchor=west] at (1.49,0.52) {$x_2$};
        \node[pin, anchor=west] at (2.74,0.52) {$x_n$};
        \node[font=\footnotesize] at (-1.10,0.52) {$\cdots$};
        \node[font=\footnotesize] at (2.15,0.52) {$\cdots$};
        \node[font=\footnotesize] at (0.05,1.18) {$\cdots$};
    \end{scope}

    \draw[-{Latex[length=2.6mm,width=1.8mm]}, line width=0.8pt]
        (-1.10,1.65) -- (1.00,1.65)
        node[midway, above=5pt, font=\small] {}
        node[midway, below=5pt, font=\footnotesize] {$n^k$ copies of $x$};
\end{tikzpicture}
\endgroup

%% file: sections/appendices.tex
\section{Random selectors in \texorpdfstring{$\QAC^0$}{QAC0}}\label{sec:sampling}

Given a binary string $x\in\{0,1\}^n$, the selector function checks if $x$ contains $1$ and outputs one such location. 
In this section, we give two constructions in $\QAC^0$: in the binary version, it outputs a random $i\in[n]$ with $x_i=1$; and in the unary version, it outputs the indicator vector $e_i=0^{i-1}10^{n-i}$.

We remark that if $\QAC^0$ could implement the map $\ket i\to\ket{e_i}$, then the two constructions would be equivalent. Unfortunately this operator is equivalent to computing the indexing function, which is not known to be in $\QAC^0$. To this this, recall that the indexing function computes $x_i$ given input $i$ and $x$; if the operator is available, one can obtain $e_i$ from $i$ and then an entrywise-AND followed by an OR suffices. Consequently we view our selector constructions quite surprising: it bypasses the natural need of computing the indexing function.

\paragraph{The unary selector.}
Write $S(x)=\set{i:x_i=1}$ for the support of $x$.  We first construct the unary selector, whose circuit aims to return $e_i$ for a uniformly random $i\in S(x)$.  
The main challenge is to produce all $n$ output bits of $e_i$ without large fanout.  We do this by the phase kickback trick.

\begin{theorem}\label{thm:one-hot-selector}
Let $\delta\in(0,1/2)$ satisfy $\log(1/\delta)=\polylog(n)$.  There is a constant-depth, polynomial-size $\QAC$ circuit with an $n$-qubit output register $Y$ such that:
\begin{enumerate}
\item if input $x=0^n$, then measuring $Y$ gives $0^n$ with probability one;
\item if input $x\neq0^n$, then measuring $Y$ gives either $0^n$ or $e_i$ where $i$ is uniformly random in $S(x)$. 
In addition, with probability at least $1-\delta$, the latter case happens.
\end{enumerate}
\end{theorem}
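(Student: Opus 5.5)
Our plan is to split the problem into two parts. First, we produce a uniformly random element of $S(x)$ in a hashed form. Second, we use the clean $W$-state and phase-kickback machinery to write the one-hot vector into $Y$ without large fanout.

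\textbf{Step 1: coherent random subsampling.} For each scale $\ell\in\{0,1,\ldots,\lceil\log_2 n\rceil\}$ and each of $t=O(\log(1/\delta))$ independent repetitions, we prepare for every coordinate $i$ a coin $c_i$ that is $1$ with probability $2^{-\ell}$. This uses $\ell$ Hadamards and one generalized Toffoli, as in \Cref{lem:coherent-bernoulli}, but exact since $2^{-\ell}$ is dyadic. We then set $z_i=x_i\wedge c_i$, and for each $i$ compute the bit
\[
u_i=z_i\wedge\neg\OR\bigl(z_j:j\neq i\bigr),
\]
which indicates that $i$ is the unique survivor. For each trial we also compute a flag $\EX_1(z)$, using the exact $\EX_1$ circuit from the list of prior constructions. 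This costs polynomially many ancillas, one generalized Toffoli per $i$, and constant depth.

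Conditioned on $\EX_1(z)=1$, the survivor is uniform over $S(x)$ by symmetry: every $i\in S(x)$ has the same probability $2^{-\ell}(1-2^{-\ell})^{|S(x)|-1}$ of being the unique survivor. At the scale $\ell$ with $2^\ell\approx|S(x)|$, a single trial succeeds with constant probability. So over $t$ repetitions at every scale, some trial succeeds except with probability $\delta$. If $x=0^n$, no trial ever succeeds.

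\textbf{Step 2: choosing one successful trial and writing $e_i$.} We choose the first successful trial in a fixed order (scale-major, then repetition) using prefix-$\OR$s. There are $T=O(\log n\cdot\log(1/\delta))=\polylog(n)$ trials, so the indicator $g_\tau$ ("trial $\tau$ is the first success") needs only $\polylog$-fanin $\AND$/$\OR$ gates. We then set
\[
Y_i=\bigvee_\tau\bigl(g_\tau\wedge u_i^{(\tau)}\bigr).
\]
Each $g_\tau$ must be fanned out to $n$ places, and this is the main obstacle. The $\polylog$ exact fanout $\Fanout_{\polylog(n)}$ is not enough.

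We handle it in the way the paper suggests. Because at most one $\tau$ has $g_\tau=1$, and exactly one $i$ has $u_i^{(\tau)}=1$ in that trial, $Y$ is a one-hot function. We compute $Y_i$ locally by letting each $i$ read $g_\tau$ through a copy of the flag. To produce that copy without $n$-fold fanout, we use phase kickback. We prepare an ancilla register in $\ket{-}$ and apply $Z$-controlled operations from the selected position. The parity of the bits $u_i^{(\tau)}$ over $i$ equals the success flag, so $g_\tau$ can be recovered locally as
\[
g_\tau=\text{(prefix condition)}\wedge\EX_1(z^{(\tau)}),
\]
where the prefix condition involves only $\polylog$ many flags. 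Each $i$ therefore needs only a copy of those $T$ flags. We obtain these copies with a fanout tree of $\Fanout_{\polylog}$ gates of constant depth, which works because $n=(\polylog n)^{O(\log n/\log\log n)}$ would need non-constant depth. Equivalently, we store each flag in a GHZ-type register that is built by phase kickback from the $W$-state circuit of \cite{grier_morris_wu}.

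Finally, we uncompute all garbage except $Y$. Measuring $Y$ then gives $e_i$ for a uniformly random $i\in S(x)$ with probability at least $1-\delta$, gives $0^n$ otherwise, and gives $0^n$ with certainty when $x=0^n$.
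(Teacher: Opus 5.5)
\textbf{Gap: the write-out step.} The proposal never delivers the selected trial to all $n$ output positions in constant depth, which is the step you yourself call the main obstacle. None of the remedies you list works:
\begin{itemize}
\item A tree of $\Fanout_{\polylog(n)}$ gates needs depth $\Theta(\log n/\log\log n)$ to make $n$ copies, as your own sentence concedes.
\item Storing a flag in an $n$-qubit GHZ-type register means copying it into $n$ clean qubits. That is equivalent to $\Fanout_n$, and hence to parity, which is not known to be in $\QAC^0$. The $W$-state circuit of \cite{grier_morris_wu} does not provide it.
\item The ``parity of the bits $u_i^{(\tau)}$'' is an $n$-bit parity, so it hits the same wall.
\item A $\ket{-}$ ancilla with controlled-$Z$s from the selected position never explains how the $n$ bits of $Y$ get written.
\end{itemize}

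The same obstruction is already hidden in Step 1. Computing $u_i=z_i\wedge\neg\OR(z_j:j\neq i)$ for all $i$ in parallel uses each $z_j$ as a control of $n-1$ distinct Toffolis. On inputs $z=(b,0,\ldots,0)$ this map copies $b$ into $n-1$ positions, so it is itself as hard as fanout. It is also unnecessary, because on a successful trial $z^{(\tau)}$ is already the one-hot vector.

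\textbf{Missing idea: phase kickback in the Fourier basis of $Y$.} Instead of moving the flags to the outputs, move the output register to the flags.
\begin{itemize}
\item Let $s_\tau$ mark the first successful trial, using your prefix construction; each success flag $\EX_1(z^{(\tau)})$ is reused only $T$ times. Let $F=z^{(\tau)}$ if $s_\tau=1$, and $F=0^n$ if no trial succeeds.
\item Apply $H^{\otimes n}$ to $Y=\ket{0^n}$, so $Y$ holds a uniform superposition over $a\in\{0,1\}^n$. Compute the single bit $h=\langle a,F\rangle \bmod 2$.
\item Because $F$ is zero or one-hot, this parity is just $h=\OR_\tau\bigl(s_\tau\wedge\OR_i(a_i\wedge z_i^{(\tau)})\bigr)$. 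Here each $a_i$ and each $x_i$ is needed only $T=\polylog(n)$ times and each $s_\tau$ exactly once, so exact polylogarithmic fanout suffices.
\item Apply $Z$ to $h$, uncompute $h$ and its intermediates, and apply $H^{\otimes n}$ to $Y$. In every branch of the coin superposition this leaves $Y$ in the basis state $\ket{F}$ fixed by that branch's coins. Measuring $Y$ therefore samples exactly the distribution you analyzed.
\end{itemize}

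Your sampling analysis is correct and matches the paper: the scales and repetitions, uniformity by symmetry, and the $x=0^n$ case. Only the write-out step needs to be replaced by this construction.
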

\begin{proof}
Set
\[
    R=\lceil\log_2(n)\rceil,
    \qquad q=\lceil 2e\ln(1/\delta)\rceil,
    \qquad T=(R+1)q=\polylog(n).
\]
For each trial $t=(r,j)$ with $0\leq r\leq R$ and $1\leq j\leq q$, choose a mask $m_t\in\{0,1\}^n$ by including each bit independently with probability $2^{-r}$.
Define the masked vector and its success flag by
\[
    z_t=x\wedge m_t
    \quad\text{and}\quad
    u_t=\EX_1(z_t).
\]
Fix an arbitrary ordering of the trials and let
\[
    s_t=u_t\wedge\bigwedge_{t'<t}(\neg u_{t'}),
\]
which marks the first trial that isolates exactly one coordinate of $S(x)$. Thus $s$ is either all-zero or an indicator vector with exactly one $1$. The desired output is
\[
    F(x)=
    \begin{cases}
        z_t&\text{if }s_t=1\text{ for some }t,\\
        0^n&\text{if every trial fails}.
    \end{cases}
\]
In particular, $F(x)$ is always zero or a valid indicator vector, and it is zero whenever $x=0^n$.

Suppose $k=|S(x)|>0$.  If $k=1$, every trial at level $r=0$ succeeds.  Otherwise, at level $r=\lceil\log_2 k\rceil$, the sampling probability $2^{-r}$ lies in $[1/(2k),1/k]$, and the trial succeeds with probability
\[
    k\cdot2^{-r}\left(1-2^{-r}\right)^{k-1}\geq\frac1{2e}.
\]
Consequently, the probability that all trials fail is at most $\left(1-\frac1{2e}\right)^q\leq e^{-q/(2e)}\leq\delta$.
Moreover, within any trial, all $k$ marked coordinates have the same probability of being the unique survivor.  For every trial with positive success probability, the survivor conditional on success is therefore uniform on $S(x)$.
This proves both the success bound and the claimed conditional uniformity for the random function $F$.

We now implement this procedure coherently in $\QAC^0$.  
Exact polylogarithmic fanout~\cite{grier_morris_wu} supplies the $T$ copies of each $x_i$ needed to compute all $z_t$ in parallel.  The exact $\EX_1$ construction~\cite{grier_morris_wu} computes the flags $u_t$ in parallel. Each bit of $u$ is used at most $T$ times in among $s_t$'s, so another application of exact polylogarithmic fanout allows all $s_t$'s to be computed in parallel.
This computes all $z_t$'s and $s_t$'s.

Recall that $F(x)$ depends only on $z_t$'s and $s_t$'s.
To compute $F(x)$, we implement phase oracle
\[
    P_F:\ket{z_{t}\text{'s}}\ket{s_t\text{'s}}\ket a
       \longmapsto(-1)^{\langle a,F(x)\rangle}\ket{z_t\text{'s}}\ket{s_t\text{'s}}\ket a
       \quad\text{for $a\in\{0,1\}^n$,}
\]
where we omit other registers for simplicity.
To this end, for $a\in\{0,1\}^n$ and each $t$, we compute $d_t=\OR_{i=1}^n(a_i\wedge z_{t,i})$ and let $h=\OR_t(s_t\wedge d_t)$.
If $F(x)=z_t$, the vector $z_t$ has exactly one nonzero coordinate, so $d_t=\langle a,z_t\rangle$; otherwise $F(x)=0^n$, the $s$ vector is all-zero, so $h=0$. 
In either case, we have $h=\langle a,F(x)\rangle$, after which a bit-phase oracle $\ket h\to(-1)^h\ket h$, together with appropriate uncomputation, constructs $P_F$.

Finally, to write $F(x)$ on register $Y$, we use phase kickback based on $P_F$: initialize $Y$ register to $0^n$ and apply $H^{\otimes n}$, $P_F$ with $a=Y$, and $H^{\otimes n}$ again, which are all in $\QAC^0$. This completes the whole proof.
\end{proof}

\paragraph{The binary selector.}
For the binary selector, we fix
$$
L=\lceil\log_2(n+1)\rceil.
$$
Since each index $i$ has only $L$ bits, we can compute those bits directly from its masked vector using polylogarithmic fanout.
Below, we use $\operatorname{bin}(i)$ to denote the length-$L$ binary encoding of $i\in[n]$.

\begin{theorem}\label{thm:binary-selector}\label{app:binary-selector}
Let $\delta\in(0,1/2)$ satisfy $\log(1/\delta)=\polylog(n)$.
There is a constant-depth, polynomial-size $\QAC$ circuit with an $L$-qubit output register $Y$ such that
\begin{enumerate} 
\item If input $x=0^n$, then measuring $Y$ gives $0^L$ with probability one;
\item if input $x\ne0^n$, then measuring $Y$ gives either $0^L$ or $\operatorname{bin}(i)$ where $i$ is uniformly random in $S(x)$.
In addition, with probability at least $1-\delta$, the latter case happens.
\end{enumerate}
\end{theorem}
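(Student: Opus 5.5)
The plan is to reuse the classical sampling skeleton from the proof of \Cref{thm:one-hot-selector} verbatim, and change only the final output stage. As there, I set $R=\lceil\log_2 n\rceil$, $q=\lceil 2e\ln(1/\delta)\rceil$ and $T=(R+1)q=\polylog(n)$. For each trial $t=(r,j)$ I choose a random mask $m_t$ with bias $2^{-r}$. These coherent random bits can be prepared exactly from Hadamards and Toffolis when $2^{-r}$ is a power of two: take an $\AND$ of $r$ fresh $\ket{+}$ qubits. I then compute $z_t=x\wedge m_t$, $u_t=\EX_1(z_t)$ and the first-success indicators $s_t$. All of this uses exact polylogarithmic fanout and exact $\EX_1$ from \cite{grier_morris_wu}. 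The probability analysis is unchanged: conditioned on some $s_t=1$, the isolated index is uniform on $S(x)$, and the probability that every trial fails is at most $\delta$. When $x=0^n$, every $s_t=0$.

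The new step is writing $\operatorname{bin}(i)$ into $Y$ instead of $e_i$. Since $L=O(\log n)$ is small, I will compute each output bit directly as a classical $\AC^0$ function of the $z_t$'s and $s_t$'s:
\[
    Y_\ell=\OR_t\Bigl(s_t\wedge \OR_{i:\operatorname{bin}(i)_\ell=1} z_{t,i}\Bigr),\qquad \ell\in[L].
\]
If $s_t=1$, then $z_t=e_i$, so the inner $\OR$ equals $\operatorname{bin}(i)_\ell$. If no trial succeeds, every term vanishes and $Y=0^L$. There is one caveat: $\operatorname{bin}(i)$ should be nonzero for $i\in[n]$, for example by encoding $i\in[n]$ as $i$ rather than $i-1$. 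This is why $L=\lceil\log_2(n+1)\rceil$, and it keeps the failure output $0^L$ distinguishable.

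The main obstacle is fanout. Each $z_{t,i}$ appears in up to $L$ inner $\OR$s, one per output bit. Each $s_t$ is used $L$ times, and each inner $\OR$ is reused once. All of these fanout degrees are $O(\log n)$, so exact polylogarithmic fanout \cite{grier_morris_wu} makes the needed copies in constant depth. Every gate is an $\AND$ or $\OR$, so the computation is reversible with generalized Toffolis into fresh ancillas. Copying the $L$ final bits into $Y$ with CNOTs, which are single Toffolis, gives the required output.

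Measuring $Y$ then has exactly the output distribution of the classical procedure, since $Y$ is a deterministic function of the computational-basis branch. Correctness therefore follows from the earlier analysis. Uncomputing the intermediate registers is optional because only the distribution of $Y$ matters, but it can be done by running the computation in reverse. Depth is constant and size is $\poly(n)\cdot\polylog(n)$.
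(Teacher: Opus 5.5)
Your proof is correct, but it takes a different route from the paper's.

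\textbf{The paper's route.} The paper treats the unary selector of \Cref{thm:one-hot-selector} as a black box. It first writes $e_i$ (or $0^n$) into an $n$-qubit register using the phase-kickback step. It then applies the map $\ket{e_i}\mapsto\ket{e_i}\ket{\operatorname{bin}(i)}$, computing each of the $L$ output bits as an $\OR$ over a fixed subset of the bits of $e_i$, after fanning each bit of $e_i$ out $O(\log n)$ times.

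\textbf{Your route.} You stop the unary construction before its output stage and compute $\operatorname{bin}(i)$ directly from the trial data, via $Y_\ell=\OR_t\bigl(s_t\wedge\OR_{i:\operatorname{bin}(i)_\ell=1}z_{t,i}\bigr)$. This is legitimate because the obstruction that forced phase kickback in the unary case disappears here. There, gating all $n$ output bits by $s_t$ would need $n$ copies of $s_t$. With only $L$ output bits, each $s_t$ and each $z_{t,i}$ needs only $O(\log n)$ copies, which exact polylogarithmic fanout supplies.

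\textbf{What each buys.} Your argument is more elementary and self-contained. It is a purely classical reversible $\AND$/$\OR$ computation with polylogarithmic fanout, run on uniformly weighted coherent masks, so $Y$ is a deterministic function of each computational-basis branch. It never uses the phase oracle $P_F$. The paper's argument is more modular: it is a two-line reduction that inherits correctness directly from the unary selector. The cost is that it relies on the heavier phase-kickback machinery and materializes an $n$-bit intermediate register.

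You also make explicit two details the paper leaves implicit. First, $i$ should be encoded so that $\operatorname{bin}(i)\neq 0^L$, which is what $L=\lceil\log_2(n+1)\rceil$ permits. Second, biased masks can be generated exactly by $\AND$s of fresh $\ket{+}$ qubits.
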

\begin{proof}
Given the unary selector, the binary selector is much easier: we simply observe that the map $\ket{e_i}\to\ket{e_i}\ket{\operatorname{bin}(i)}$ is in $\QAC^0$. This is because there are only $L$ bits in $\operatorname{bin}(i)$ and, since $e_i$ has Hamming weight $1$, each such bit can be computed by an OR of subsets of bits in $e_i$. Consequently (poly)logarithmic fanout suffices.
\end{proof}

We make a final remark there.
While the map $\ket{e_i}\to\ket{e_i}\ket{\operatorname{bin}(i)}$ is in $\QAC^0$, as used in the above proof, we do not know if it entails $\ket{e_i}\to\ket i$ in $\QAC^0$, which would put the indexing function into $\QAC^0$.

\section{Approximate counting in \texorpdfstring{$\QAC^0$}{QAC0}}

In this section, we show how $\QAC^0$ can count, up to small multiplicative error.
Our construction is based on the $\mathsf{AC}^0$-style counting sketch~\cite{ajtai_approximate_counting,viola_approximate_counting}, which requires only polylogarithmic fanout and small threshold functions.

\begin{theorem}\label{thm:approx-counter}
Let $\eps,\delta\in(0,1/2)$ satisfy $1/\eps=\polylog(n)$ and $\log(1/\delta)=\polylog(n)$.
There is a constant-depth, polynomial-size $\QAC$ circuit that, on input $x\in\{0,1\}^n$, outputs an integer $w\in\set{0,\ldots,n}$ such that
\begin{enumerate}
\item if Hamming weight $|x|=0$, then $w=0$ with probability one;
\item if Hamming weight $|x|>0$, then $(1-\eps)\abs{x}\leq w\leq(1+\eps)\abs{x}$ with probability at least $1-\delta$.
\end{enumerate}
\end{theorem}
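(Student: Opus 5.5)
I would implement the classical geometric counting sketch coherently, using the same ingredients as the selector constructions. These are exact polylogarithmic fanout and exact $\EX_k$/$\THR_k$ for $k\le\polylog(n)$ from \cite{grier_morris_wu}, with the masks chosen at random.

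Fix $\beta=1+\eps/3$ and levels $r=0,1,\ldots,R$ with $R=\lceil\log_\beta n\rceil=\polylog(n)$. For each level $r$, take $q=\polylog(n)$ independent trials. In each trial, sample a mask $m_{r,j}$ with bits i.i.d.\ of probability $\beta^{-r}$; bias these bits to $1/\polylog$ precision with fresh Hadamard ancillas, or with arbitrary single-qubit rotations, since $\QAC$ allows them. Compute $u_{r,j}=\OR(x\wedge m_{r,j})$, i.e.\ whether the masked string is nonzero. This requires each $x_i$ to be copied $T=(R+1)q=\polylog(n)$ times, which is exact polylogarithmic fanout. For an input of weight $k$, $\Pr[u_{r,j}=1]=1-(1-\beta^{-r})^k$. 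This quantity crosses $1-1/e$ near $\beta^r\approx k$, and it is monotone in $k$ for each fixed $r$.

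Next, for each level compute $c_r=\THR_{\ge \tau q}(u_{r,1},\ldots,u_{r,q})$ with a fixed threshold $\tau$. This is exact, because $q=\polylog(n)$. Let $r^*$ be the largest $r$ with $c_r=1$; the output is $w=\lfloor\beta^{r^*}\cdot\kappa\rfloor$ for a calibration constant $\kappa$. Alternatively, and more accurately, refine $w$ within level $r^*$ by using the empirical fraction $\hat f_{r}$ of $u_{r,j}=1$ and inverting $f\mapsto \ln(1-f)/\ln(1-\beta^{-r})$.

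Since $r^*$ and the $\hat f$ values involve only $\polylog(n)$ bits, the map from $(c_r)_r$ and the counts to the binary encoding of $w$ is a function of $\polylog(n)$ bits. I expect to compute it as in the binary selector: first take the one-hot indicator $s_r=c_r\wedge\bigwedge_{r'>r}\neg c_{r'}$, then compute each output bit as an $\OR$ of $\polylog(n)$ terms. I plan to avoid general $\polylog$-input functions, which are not obviously in $\QAC^0$. To do so, I will restrict the output to a lookup indexed by the one-hot level together with the count at that level, which $\EX_k$ gates on $q$ bits provide exactly.

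For $x=0^n$, every $u_{r,j}=0$, so every $c_r=0$. The convention is then $w=0$, and this holds with probability one because the masks only intersect $x$. For $k>0$, Chernoff bounds with $q=\Theta(\eps^{-2}\log(R/\delta))=\polylog(n)$ give, with probability at least $1-\delta$, that every level's empirical fraction is within $\eps/10$ of its mean. Given this, $r^*$ lands in the correct window and the inversion yields $(1\pm\eps)k$.

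I expect the main obstacle to be the precision analysis of this last step. The multiplicative guarantee requires that the inverted estimator at level $r^*$ be $\eps$-sensitive. This needs the mean $1-(1-\beta^{-r^*})^k$ to be bounded away from $0$ and $1$, so that its derivative in $k$ is $\Omega(1/k)$. Choosing $\tau$ between two constant levels, for example $0.3$ and $0.7$, and using the geometric ratio $\beta$ close to $1$ ensures this. Small $k$, where $\beta^{-r}$ is near $1$, needs separate handling. I would handle it by computing $\EX_j(x)$ exactly for $j\le \polylog(n)$ and using the exact value in that regime.
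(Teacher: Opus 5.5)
Your proposal is correct as a sketch, but it uses a different sketch from the paper's.

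\textbf{The paper's route.} The paper runs the Ajtai--Viola test directly. At level $j$ it keeps each coordinate with probability $B/T_j$, where $B=\Theta(\eps^{-2})$, and applies a single $\THR_{\ge B}$ on $n$ bits. The accuracy $\eps$ comes from the large threshold $B$. This gives only constant success probability, which is boosted by taking the median of $O(\log(1/\delta))$ independent runs. The median is computed by checking each candidate in $\{0,\ldots,2B\}\cup\{T_1,\ldots,T_J\}$ with a polylog-size majority of comparisons.

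\textbf{Your route.} You use only the nonemptiness bit $\OR(x\wedge m_{r,j})$ at each level. Both the $\eps$-accuracy and the $1-\delta$ confidence come from $q=\Theta(\eps^{-2}\log(R/\delta))$ repetitions per level together with a union bound over levels. As a result:
\begin{itemize}
\item the only gates acting on the full $n$-bit input are $\OR$ and the small-weight $\EX_j$ checks (which both routes use);
\item every threshold acts on just $q=\polylog(n)$ flags;
\item no median or comparison circuit is needed, and the answer is read off by a one-hot lookup.
\end{itemize}

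\textbf{What each route buys.} The paper can defer the estimator's correctness entirely to \cite{ajtai_approximate_counting,viola_approximate_counting} and only argue implementability. You must do the calibration yourself: the link between $1-(1-\beta^{-r})^k$ and $k$, the choice of $\tau$ and $\kappa$, and the window argument for $r^*$. I checked that the window argument works with constant $\tau$ and $\beta=1+\eps/3$.

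\textbf{One detail to fix in the crude estimator $\kappa\beta^{r^*}$.} With $R=\lceil\log_\beta n\rceil$ you have $f_R(k)\le 1-(1-1/n)^n\approx 1-1/e$. So for weight near $n$ and $\tau\le 1-1/e$, level $R$ can pass. Then $r^*=R$ comes with no upper-window guarantee, and for $\tau=0.3$, $k=n$ the output is off by a constant factor. To fix this, either extend the levels so that $\beta^{R}\ge Cn$ for a suitable constant $C$, or take $\tau$ bounded above $1-1/e$. The inversion estimator does not have this problem, since it only needs $f_{r^*}$ bounded away from $0$ and $1$.
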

\begin{proof}
The proof combines the classic geometric subsampling sketch~\cite{ajtai_approximate_counting,viola_approximate_counting} with the known polylogarithmic fanout and threshold function constructions in $\QAC^0$ \cite{grier_morris_wu}.

We briefly outline the classical approximate counting algorithm.
Set $B=\Theta(\eps^{-2})$ and use $J=O(\eps^{-1}\log n)$ candidate weights $T_j=\lfloor B\cdot(1+\Theta(\eps))^j\rceil$ extending up to $n$.
For each Hamming weight $w$ at most $2B$, we use threshold functions to directly check if $|x|=w$.
For each $j\in[J]$, we perform the following subsampling algorithm: independently keep each coordinate of $x$ with probability $B/T_j$ and check whether at least $B$ kept coordinates are ones; note that the expected number of ones is $B|x|/T_j$.
Let $j'\in[J]$ be the largest candidate that succeeds the test.
If $|x|\le2B$, we have directly computed $|x|$; otherwise, we use $T_{j'}$ as the approximation of $|x|$.
Standard concentration bounds show that the multiplicative error is at most $\eps$ with constant success probability. Then the ultimate construction computes the median of $O(\log(1/\delta))$ independent runs to reduce the failure probability to $\delta$.
The detailed analysis can be found in \cite{ajtai_approximate_counting,viola_approximate_counting}.

To implement the above algorithm in $\QAC^0$, we recall from \cite{grier_morris_wu} that threshold functions of polylogarithmic weights and polylogarithmic fanout have exact implementation in $\QAC^0$. By our assumption on $\delta$ and $\eps$, one can implement the $O(\log(1/\delta))$ runs exactly in $\QAC^0$.
As for the final median computation, we observe that it is the median of polylogarithmic many numbers in $S:=\{0,1,\ldots,2B\}\cup\{T_1,\ldots,T_J\}$. Since both $B$ and $J$ are polylogarithmic, we can check individually if each $s\in S$ is the median. This individual check can then be implemented as a polylogarithmic-sized majority of greater-than functions on two binary strings of polylogarithmic length. Hence these also only use threshold functions of polylogarithmic weights and polylogarithmic fanout. This completes the whole proof.
\end{proof}